\newif\ifec
\ifec
    \documentclass[format=acmsmall, review=true]{acmart}
    \usepackage{acm-ec-25}
    \setcitestyle{authoryear}
\else
    \documentclass[11pt]{article}
    \usepackage[margin=1in]{geometry}
    \usepackage{amssymb}
    \usepackage[colorlinks=true, linkcolor=black, citecolor=blue, urlcolor=black, filecolor=blue, pdfborder={0 0 0}]{hyperref}
\fi
\usepackage{graphicx} 

\usepackage{booktabs}
\usepackage{rotating}
\usepackage{graphicx}
\usepackage{subfig}
\usepackage{pdfpages}
\usepackage{pdflscape}
\usepackage{textcomp}
\usepackage{longtable}
\usepackage{nicefrac}
\usepackage{adjustbox}	
\usepackage{bbm}
\usepackage{svg}

\usepackage{tikz}
\usepackage{tikz-cd}

\usepackage{natbib}
\usepackage[sectionbib]{bibunits}

\bibpunct{(}{)}{;}{a}{,}{,}

\usepackage{amsmath, amsfonts, amsthm}
\usepackage{accents}

\usepackage{mathtools}
\usepackage{cleveref}

\newtheorem{theorem}{Theorem}[section]
\newtheorem{proposition}[theorem]{Proposition}
\newtheorem{lemma}[theorem]{Lemma}
\newtheorem{fact}[theorem]{Fact}
\newtheorem{corollary}[theorem]{Corollary}

\theoremstyle{definition}

\newtheorem*{example*}{Example}



\usepackage{titlesec}
\titlespacing*{\section}{0pt}{1.5ex plus 1ex minus .2ex}{0.8ex plus .2ex}
\titlespacing*{\subsection}{0pt}{1.2ex plus 1ex minus .2ex}{0.8ex plus .2ex}


\newcommand{\abs}[1]{\left| #1 \right|}

\newcommand{\cE}{\mathcal{E}}

\newcommand{\cS}{\mathcal{S}}

\newcommand{\cV}{\mathcal{V}}


\newcommand{\defeq}{:=}

\newcommand{\Z}{\mathbb{Z}}
\newcommand{\N}{\mathbb{N}}

\newcommand{\floor}[1]{\lfloor {#1} \rfloor}
\newcommand{\paren}[1]{\left( {#1} \right)}

\newcommand{\sqparen}[1]{\left[ {#1} \right]}

\newcommand{\curly}[1]{\left\{ {#1} \right\}}

%
%


\DeclareMathOperator*{\argmin}{arg\,min}

\newcommand{\ch}[1]{\textcolor{olive}{[CH: {#1}]}}
\newcommand{\cd}[1]{\textcolor{red}{Cynthia: {#1}}}
\newcommand{\lh}[1]{\textcolor{orange}{Lunjia: {#1}}}
\newcommand{\nsi}[1]{\textcolor{blue}{NSI: {#1}}}
\newcommand{\jcp}[1]{\textcolor{green}{JCP: {#1}}}

\renewcommand{\ch}[1]{}
\renewcommand{\cd}[1]{}
\renewcommand{\lh}[1]{}
\renewcommand{\nsi}[1]{}
\renewcommand{\jcp}[1]{}

\usepackage{xcolor}
\usepackage[shortlabels]{enumitem}
\definecolor{navy}{rgb}{0, 0, 0.75}

\newcommand{\munderbar}[1]{\underaccent{\bar}{#1}}

\newcommand{\remove}[1]{{}}

\newcommand{\eur}{\mathsf{UR}}

\newcommand{\dgupper}{\bar{d}_G}
\newcommand{\dglower}{\munderbar{d}_G}

\newcommand{\ugupper}{\bar{u}_G}
\newcommand{\uglower}{\munderbar{u}_G}
\newcommand{\ubupper}{\bar{u}_B}
\newcommand{\ublower}{\munderbar{u}_B}

\renewcommand{\epsilon}{\varepsilon}

\crefname{fact}{Fact}{Facts}
\crefname{proposition}{Proposition}{Propositions}
\crefname{corollary}{Corollary}{Corollaries}
\crefname{lemma}{Lemma}{Lemmas}
\Crefname{fact}{Fact}{Facts}
\Crefname{proposition}{Proposition}{Propositions}
\Crefname{corollary}{Corollary}{Corollaries}
\Crefname{lemma}{Lemma}{Lemmas}

\setcounter{tocdepth}{1}

\title{Inducing Efficient and Equitable Professional Networks through Link Recommendations\thanks{This work was facilitated by the Hire Aspirations Institute at Harvard, supported in part by Alfred P.\ Sloan Foundation grant G-2017-9890.}}
\remove{
``Integration through Recommendations'',
``Link recommendations and cross group connectivity in a professional network'',
``Efficiency-Fairness Alignment in Link Recommendations on Professional Networking Platforms''
``Reducing Inequality in Professional Networks through Welfare-Aligned Link Recommendations''
}
\ifec
    \author{Submission 28}
\fi
\date{\today}

\ifec
\else
    \begin{document}
        \author{Cynthia Dwork\thanks{Department of Computer Science, Harvard University.}
        \and
        Chris Hays\thanks{Institute for Data, Systems and Society, Massachusetts Institute for Technology.}
        \and
        Lunjia Hu\thanks{Department of Computer Science, Harvard University. Supported by the Simons Foundation Collaboration on the Theory of Algorithmic Fairness and the Harvard Center for Research on Computation and Society (CRCS).}
        \and
        Nicole Immorlica\thanks{Microsoft Research.}
        \and
        Juan Perdomo\thanks{Department of Computer Science, Harvard University. Supported by the Harvard Center for Research on Computation and Society (CRCS) and Alfred P.\ Sloan Foundation grant G-2020-13941.}}
        \maketitle
\fi

\begin{abstract}

    Professional networks are a key determinant of individuals’ labor market outcomes. They may also play a role in either exacerbating or ameliorating inequality of opportunity across demographic groups. In a theoretical model of professional network formation, we show that inequality can increase even without exogenous in-group preferences, confirming and complementing existing theoretical literature.  Increased inequality emerges from the differential leverage privileged and unprivileged individuals have in forming connections due to their asymmetric \textit{ex ante} prospects. This is a formalization of a source of inequality in the labor market which has not been previously explored.

    We next show how inequality-aware platforms may reduce inequality by subsidizing connections, through link recommendations that reduce costs, between privileged and unprivileged individuals.  Indeed, mixed-privilege connections turn out to be welfare improving, \textit{over all possible equilibria}, compared to not recommending links or recommending some smaller fraction of cross-group links. Taken together, these two findings reveal a stark reality: professional networking platforms that fail to foster integration in the link formation process risk reducing the platform's utility to its users and exacerbating existing labor market inequality.
\end{abstract}

\ifec
    \begin{document}
    \begin{titlepage}
        \maketitle
    
        \tableofcontents
    \end{titlepage}
\fi

\section{Introduction}

    \ch{Test.}

    Individuals' professional networks play a crucial role in their labor market outcomes.
    They facilitate job placement and lead to higher wages, increased productivity and greater performance \citep{jamison2022mentorship,burchardi2013economic,lin1986access,jackson2012social,ambrus2014consumption,lin2017building,bayer_place_2008,cingano_people_2012,arbex_network_2019,arrow_limited_2004,brown_informal_2013,burks_value_2015}.
    However, they may also exacerbate employment or wage inequality, lead to differential job search costs, yield uneven gains to new technology or otherwise entrench or increase economic or social stratification \citep{dimaggio_network_2012,ioannides_job_2004,pedulla_race_2019,beaman_job_2018}.
    
    Digital professional networking platforms crucially influence network formation by providing link recommendations between individuals.
    Indeed, a large fraction of all connections are formed through recommendations \citep{rajkumar_causal_2022}.
    Thus, the choices that platforms make with respect to their recommendation systems have far-reaching implications for their millions of users.

    A rich theoretical literature has explored the sources of inequality in networks.
    \citet{bolte_role_2020} show how networks can increase inequality relative to exogenous levels due to homophily, the phenomenon where ``similarity breeds connection'' \citep{mcpherson_birds_2001}.
    \citet{calvo-armengol_effects_2004} explore how costs of staying in the labor market can produce permanent disadvantages for those who start in opportunity-poor parts of a network.
    \citet{okafor_seeing_2022} demonstrates that inequality can emerge from homophily and disproportionately larger networks among members of a majority group. 
    Each of these contexts explore how the \textit{a priori} existence of homophily   
    can lead to increased inequality. 
    
    In this work, we explore a pathway through which inequality can be exacerbated \textit{even without exogenous homophily}. Homophily, in which individuals with similar access to opportunities connect, may form endogenously between privileged groups through a theoretical model of strategic link formation in a professional network.
    %
    Namely, we show that inequality can increase as a result of the differential leverage between individuals who come to the network with a starting level of privilege --- and therefore can offer greater benefits to potential connections --- and those that do not.
    One important new implication surfaced by our work is that it may not be sufficient to eliminate existing homophily to eliminate sources of network inequality.
    In other words, inequality can be \textit{self-propagating}.
    Thus, unless platforms or regulators can, in addition to inequality resulting from homophily, also address the underlying asymmetry in the leverage individuals bring to networking, inequality may continue to increase.
    
    We also explore the role a platform may play in ameliorating this form of inequality through link recommendations.
    We explore a set of natural link recommendation policies and show policies that favor cross-group connections decrease inequality relative to exogenous levels.
    Indeed, it turns out that cross-group link recommendations are \textit{welfare increasing}, relative to policies of not recommending any links at all or recommending a smaller fraction of cross-group links. 
    The additional welfare comes from a better use of opportunities on the platform, which is arguably good for the platform itself as well. 
    Our intuition is that a primary consideration for professional networking platforms is that users find opportunities through the platform (and thus have a reason to maintain their accounts). Thus, an increase in welfare may be aligned with the incentives of the platform acting in its long term interests.
    
    At a high level, our model consists of a population of individuals, each belonging to one of two groups. 
    Each individual in a group receives a random number of opportunities.
    Opportunities may be thought of as jobs, referrals, contracts, projects, or other desirable professional outcomes.\footnote{This model may also be suitable to describing the (costly-per-person) transmission of information within a network. 
    This is not to be confused with ``fixed-cost'' transmission, like social media posts that simultaneously go out to all social contacts at once.
    Our intuition is that costly-per-person transmission is the more relevant kind for professional networking.
    This is perhaps justified by the fact that fixed-cost transmission looks similar to network-wide communication, which is less dependent on network structure. 
    }
    They may come from one of two sources: first, exogenously or, second, after exogenous opportunities are drawn, from a contact in a professional network.
    Exogenous opportunities might come from an individual's educational or socio-economic background, their family, or other pre-determined sources.\footnote{We may even extend the model to consider ``off-platform'' links as having formed exogenously through pre-determined factors like an indivual's personal history. These off-platform links may be encapsulated into a person's exogenous opportunity distribution and we may assume that all links considered in this work are ``on-platform'' links. Our results can easily accommodate this framing of exogenous versus endogenous connections. See \Cref{app:preliminaries} for further discussion.} 
    The arrival of exogenous opportunities is assumed to be out of the individuals' control and unrelated to the individual's ability to make use of an opportunity.
    If an individual receives more exogenous opportunities than they can use, they pass remaining opportunities to their contacts in a professional social network.
    Sometimes, an individual may receive more opportunities than all of their social contacts can use, in which case any extra opportunities are wasted.
    Similarly, two individuals may unwittingly pass more opportunities to a single person than that person can use, in which case the extra opportunities are wasted.
    Thus, the second source of opportunities for an individual in our model is from their social contact receiving an extra opportunity and passing it to them.

    Our focus in this work is on \textit{network formation}. 
    Prior to the point at which exogenous opportunities are realized, individuals in the model will form connections with others.
    Forming each connection bears a small cost for the individuals involved but allows for the possibility that one may pass an opportunity to the other. 
    The cost may be imagined to be a search cost (\textit{e.g.}, the difficulty of finding each other) or an attentional burden (\textit{e.g.}, the time necessary to navigate a user interface or communicate).
    %
    Each individual will choose how many connections to create, and with whom, according to their own interests: balancing the cost of the connection against the probability that they receive an opportunity as a result of the connection and would not have otherwise received opportunities.

    Individuals within each group have the same opportunity distributions, and there will be one group with a higher number of 
    exogenous opportunities (the {privileged} group) and one with a lower number (the {unprivileged} group).
    Our focus will be on quantifying \textit{efficiency}, or the number of opportunities that are used, and \textit{inequality}, the ratio of the number of opportunities used within each group.

    Finally, we analyze a platform that has the limited capacity to influence outcomes 
    by making link recommendations.
    Recommendations have the effect of reducing connection costs by reducing search or attentional effort, and we analyze the effect of different policies on efficiency and inequality.
    We introduce the model in more detail and define notation in \Cref{sec:model}.
    %
    %

    \paragraph{Our contributions.} 
    We extend the model of strategic network formation from \citet{dwork_equilibria_2024} to the setting where there are heterogeneous starting exogenous opportunity distributions across two groups.
    To their model, we also add the role of the platform in facilitating connections through link recommendations, which take the form of a subsidy on the cost of connections.
    We also generalize their model to allow individuals to receive more than two opportunities with nonzero probability.

    We then show how platforms can increase labor market inequality, {even when} individuals have no \textit{a priori} in-group preferences.
    Specifically, we show that inequality, defined as the ratio of average utilities between the two groups, increases relative to exogenous levels at equilibrium in the model where all connections are organic (\emph{i.e.}, the platform does not make any recommendations). 
    This is because they enable cheap connections of which only the already privileged can take advantage.
    Thus, a ``facially neutral'' platform that does not try to intervene on the link recommendation process increases inequality.
    
    Interestingly, despite the fact that our model does not bake in any \textit{a priori} in-group preferences, homophily may emerge endogenously.
    This is because, especially when exogenous inequality between groups is a large, individuals who are privileged may only be willing to connect to other privileged individuals; connecting with a less privileged person is not worth the (communication or search) costs associated with connecting, and so facially neutral platforms may be segregated.
    Thus, our work can serve as a microfoundation for homophily in professional networking, complementing the existing literature where homophily is assumed to be pre-existing.
    
    Towards understanding what platforms can do to address inequality, we next turn our attention to their influence over the network formation process.
    Specifically, we show that the natural way of alleviating inequality, \emph{i.e.}, encouraging cross-group connections, is also the most effective way of improving total utility, over all possible equilibria, as long as the exogenous level of inequality is sufficiently large.
    The reason for this is that fewer opportunities are wasted when the network is more integrated: With more cross-group links, it is more likely that, when an opportunity is passed, it is passed to a lower-privileged person who would not otherwise receive an opportunity. This is similar to the reason affirmative action decreases inequality in \citet{bolte_role_2020}; we show the effect persists even in the presence of strategic network formation.
    
    Thus, it can be in the best interest of a purely utilitarian platform to recommend cross-group links and consequently reduce inequality. Somewhat surprisingly, we demonstrate this fairness-utility alignment  even without exogenous reasons to favor diversity: all the opportunities in our model are equivalent, so an opportunity obtained from a cross-group link does not bring any extra benefit compared to an opportunity from an in-group link. In reality, opportunities may come in a variety of types and it is often beneficial to have a wide coverage of different types of opportunities from diverse sources \citep{heidari_informational_2023}. In those cases, the incentive of a utilitarian platform to recommend cross-group links can be even stronger. {Our result provides an explanation for the advantages of integration {even without} \textit{a priori} reasons to favor diversity.}

    \paragraph{Related work.} 
    Several previous theory models have explored quantification of or remedies for inequality in networks for hiring and opportunity.

    We adapt our model from \citet{dwork_equilibria_2024}, which, among other things, quantifies inequality that can emerge among a single, homogeneous group as a result of network formation.
    Our model extends theirs by allowing for social groups with distinct levels of exogenous privilege, by allowing the network to influence incentives by making recommendations, and generalizing exogenous opportunity distributions to allow an individual to receive more than one opportunity with nonzero probability.

    \citet{bolte_role_2020} studies how exogenous inequality and \textit{ex ante} homophily within two demographic groups can lead to increased inequality. 
    In our model, there is no ex ante homophily inducing inequality --- instead, ``neutral'' networks increase inequality because privileged individuals have more leverage to form higher value connections.
    Thus, our analysis describes a different pathway through which inequality can increase in a network.
    We expect that introducing \textit{ex ante} homophily would lead to further increases in inequality in our model, but leave exploration of this to future work.
    \citet{bolte_role_2020} also quantify how affirmative action (reducing a hiring threshold for the minority group) may increase efficiency, which is complementary to our analysis of how network integration can increase efficiency.

    \citet{okafor_seeing_2022}, using a variant of the model developed in \citet{montgomery_social_1991}, studies how inequality can emerge among two demographic groups where there is no exogenous inequality, one group is larger, and there is \textit{ex ante} homophily.
    The mechanism by which inequality emerges is that the network formation process in exogenous and leads to a disproportionate number of ties going from minority group to the majority group.
    Thus, even though minority group members prefer to pass opportunities to other minority group members, the large number of connections they have to majority group members leads opportunities to disproportionately go to majority group members.
    In our model, we study how inequality can increase \textit{without} {pre-existing} homophily but \textit{with} a network formation process influenced by the incentives of individuals.

    In \citet{calvo-armengol_effects_2004}, inequality between groups emerges as a result of differential exogenous starting employment status and up-front costs of staying in the market.
    In their model, agents who do not wish to pay the up-front costs will permanently leave the market, leading others in their neighborhood to have worse prospects and to do the same.
    The starting network in their model is given (although some agents may leave the network, no new connections form), whereas ours is endogenous, allowing us to ask about how a platform may induce new connections that increase efficiency and equity.

    More broadly, our work fits in a large body of research providing microfoundations for addressing inequality in society. 
    Among them, \citet{heidari_allocating_2021} considers a model of intergenerational opportunities showing that it can be economically efficient to allocate (\textit{e.g.}, educational) opportunities to individuals of lower socioeconomic status rather than higher-performing individuals of higher socioeconomic status.
    Other work has considered how affirmative action policies may help inform firms about the capabilities of or counteract the biases against minority workers \citep{coate_will_1993,celis_effect_2021,kleinberg_selection_2018}.
    Our paper complements this literature by providing a microfoundation for cross-group link recommendations in professional networking.

    A final related line of related work considers the learning theoretic aspects of determining who to connect in social networks.
    Formalization of the problem of determining which links to form in a social network dates back to at least \citet{liben-nowell_link_2003}.
    \citet{dwork_fairness_2024} shows how to make fair link predictions based on potentially complex and evolving characteristics. 
    Our work does not consider heterogeneous link formation probabilities between individuals in a population, and extensions to address this aspect of network formation would be a valuable direction for future work.

    \paragraph{Organization of the paper.} In \Cref{sec:model} we formally define the model and notation. In \Cref{sec:equilibrium}, we establish several features of equilibria in our model. In \Cref{sec:inequality} we explore how facially neutral platforms can increase inequality in equilibrium, relative to exogenous levels. We also analyze how cross-group link recommendations can be used to counteract these effects and reduce inequality relative to exogenous levels. In \Cref{sec:welfare}, we explore the welfare implications of cross-group link recommendation policies and show that recommending cross-group links is utility improving relative to not making any recommendations or making a smaller fraction of cross-group recommendations, for all equilibria. In \Cref{sec:conclusion}, we conclude and discuss avenues for future work.

    \section{Model} \label{sec:model}

    We  analyze a population of individuals $i \in [n]$.     Individuals belong to one of two (non-empty) groups: a {\em privileged} green group, denoted  $G$, and an {\em unprivileged} blue group, denoted $B$.     We assume that $\abs{G}, \abs{B} = \Omega(n)$, but the groups need not be the same size.
    
    Individuals each seek a single opportunity. We assume all opportunities are of equal value to all individuals, normalized to $1$.
    Opportunities arrive either exogenously from nature or endogenously through the job network. 
    Exogenous opportunities are drawn from group-dependent probability distributions. 
    %
    %
    %
    For each $i \in B$ (resp.\ $i\in G$), the probability of $i$ receiving $k$ exogenous opportunities is denoted $b_{k}$ (resp.\ $g_k$). Exogenous opportunities are drawn IID conditional on group.
    Privilege relates to the exogenous distribution of opportunities. 
    Namely, we assume that $g$ stochastically dominates $b$ (i.e., for all $m \in \Z_{\geq 0}$ it holds $\sum_{\ell \ge m}g_\ell\ge\sum_{\ell \ge m}b_\ell$ with the inequality strict for some $m$).
    We will also impose assumptions, to be formalized later, ensuring that the $b_0$ is sufficiently large and that $g_0$ is not too large.
    These assumptions allow us to ensure that \textit{inequality} is sufficiently large so that our results hold \textit{over all possible equilibria} and thus for any equilibrium selection process.
    %
    %
    %
    %
    %
    If we do not wish to specify a group membership for an individual $i$, we will refer to their probability of receiving $\ell$ exogenous opportunities as $p_{i\ell}$.
    We also assume that there is a constant $C$ so that for all $\ell \geq C$, it holds $p_{i\ell} = 0$ for all $i \in [n]$.

    %
    %
    %
    %
    %

    Opportunities also arrive  through an endogenously-formed job network, creating a game whose action space for each individual is a selection of potential neighbors. 
    The timing of the game is as follows. A platform first recommends a (potentially empty) subset of links $Q_i\subseteq\{(i,j)\,:\,j\in[n]\}$ of size $\abs{Q_i}=k$ for each $i \in [n]$, which have the effect of eliminating the cost of the connection. We assume if $(i,j)\in Q_i$, then $(j,i)\in Q_j$.   Define $Q=\cup_{i\in[N]}Q_i$.  Individuals --- knowing $Q$, $[n]$, and the exogeneous distribution of opportunities for each individual but \textit{not} their specific realization --- then choose neighbors, forming a network $E\subseteq\{(i,j)\,:\,i,j\in[n]\}$.  
Let $N_i(E)=\{j \, : \, (i,j)\in E\}$ and $d_i(E)=\abs{N_i(E)}$ (when the edge set $E$ is clear from the context; we drop it from the notation). We will call edges in the network that were not recommended (\textit{i.e.}, $E \setminus Q$) \textit{organic} or \textit{non-recommended} connections. Finally,  exogenous opportunities are realized and extra ones are distributed to neighbors in the network uniformly at random.
    %
    %
    %
    %
    Specifically, if $i$ receives $\ell> 1$ exogenously opportunities, then they select $(\ell-1)$ of their $d_i$ neighbors uniformly at random and pass each of them a single opportunity (discarding leftover opportunities if $\ell-1>d_i$).
    Let
    \begin{align*}
        \mu_i(d) = \textstyle \sum_{\ell \geq 1} \min \curly{\ell - 1, d}p_{i\ell}
    \end{align*}
    be the expected number of opportunities an individual of degree $d$ passes to their neighbors. 
    Let $\mu_B(d)$ (resp.\  $\mu_G(d)$) be this expectation for a generic (\emph{i.e.}, any) blue (resp.\ green) group member.
    %
    %
    We will overload notation and write $\mu_i = \mu_i(\infty)$  when we wish to refer to the expected number of extra opportunities a member receives (sim.\ $\mu_G$ and $\mu_B$). I.e., $\mu_i(\infty)$ is the expected number of opportunities a person with arbitrarily large (hence, $\infty$) degree will pass to their neighbors.
    %
    To avoid trivialities where there is no chance a member of a group receives any extra opportunities, we will assume throughout that $\sum_{\ell \geq 1} (\ell - 1) g_\ell$ and $\sum_{\ell \geq 1} (\ell - 1) b_\ell$ are positive.

    Given a network and realization of exogenous and endogenous opportunities, an individual $i$ receives a utility of $1$ if they receive an (exogenous or endogenous) opportunity and pays a cost of $\gamma$ for each of their unrecommended neighbors in the network. Thus their \textit{ex ante} expected utility from a network $E$ is: 
    \begin{align}
    u_i(E)=\bigg(1-p_{i0}\prod_{j\in N_i(E)}\left(1-\frac{\mu_j(d_j)}{d_j}\right)\bigg)-\gamma \cdot \abs{N_i \setminus Q_i}    
    \label{eq:utilformula}
    \end{align}
where the first term is the probability of receiving an opportunity and the second is the cost of their organic connections.  Social welfare will be defined, depending on the context, as the sum of individuals' expected utilities or the minimum utility over the population. 

We study the equilibria of the induced game.  As in \citet{dwork_equilibria_2024}, we will use defection-free pairwise Nash (DFPN) as our equilibrium concept. We will characterize pure-strategy equilibria in this work (\textit{i.e.}, we will not consider players who select edges non-deterministically). This yields the property that equilibria are uniquely specified by a set of formed edges. 
    %
    %
    DFPN considers defections in which individuals may unilaterally sever links and/or pairs of individuals may add a link if it is mutually beneficial. Notably in contrast to the notion of pairwise stability studied in \citet{jackson_strategic_1996}, a single defection may involve both severing and forming links simultaneously. In this sense, DFPN is a refinement of pairwise stability that allows individuals to consider dropping current links when contemplating forming a new one, a natural option for users in online platforms which is analogous to stability in stable marriage problems. 
    Formally, an edge set $E$ is DFPN if
    \begin{enumerate}[(a)]
        \item 
        for all $i,j$ such that $(i,j) \not\in E$ and for all $S_i \subseteq N_i(E)$, $S_j \subseteq N_j(E)$
        \begin{align}
        \begin{aligned}
            0 &\geq {u_i(E \cup \{ (i,j) \} \setminus \{ (i,\ell) \; : \; \ell \in S_i \})} - {u_i(E)}, \;\text{or} \\
            0 &\geq {u_j(E \cup \{ (i,j) \} \setminus \{ (j,\ell) \; : \; \ell \in S_j \})} - {u_j(E)}, 
        \end{aligned}\label{eq:noadd}
        \end{align}
        \item and for each $i,j$ such that $(i,j) \in E$, it holds
        \begin{align}
            \begin{aligned}
            0 &\geq {u_i(E \setminus \{ (i,j) \} )} - {u_i(E)}, \;\text{and}  \\
            0 &\geq {u_j(E \setminus \{ (i,j) \} )} - {u_j(E)}.
            \end{aligned}\label{eq:nodelete}
        \end{align}
    \end{enumerate}
    Intuitively, \cref{eq:noadd} specifies that no pair of individuals would prefer to form a connection amongst themselves, dropping a (possibly empty) set of their existing connections.
    \Cref{eq:nodelete} specifies that no individual in a connection would rather sever the connection.
    For fixed parameters, we will write the set of DFPN equilibrium edge sets as $\cE$. If we wish to contrast different equilibrium edge sets for different parameters, we will specify a particular choice of parameter (\textit{e.g.}, number of recommendations $k$) using subscripts (\textit{e.g.}, $\cE_k$).
    %
    %

Our main results characterize the inequality and welfare properties of different link recommendation policies on equilibrium outcomes.  In particular, we will study policies where the platform recommends a fixed fraction of cross-group connections to each individual within a group. Since we are not assuming $\abs{B} = \abs{G}$, it may not be possible for these fractions to be the same for members of $B$ and $G$. We will denote the fraction of cross-group connections recommended to $B$ as $\rho$ where $\rho \in \{0, 1/k, \dots, 1\}$ for the number of recommendations $k$.
This implies that members of $G$ receive a $\rho \abs{B}/\abs{G}$ fraction of cross-group recommendations.
Throughout, we will assume that $\rho$ is set so as to ensure $\rho \abs{B}/\abs{G} \leq 1$. This inequality is trivially satisfied if $\abs{B} \leq \abs{G}$.
Note that Hall's marriage theorem implies that such a set of cross-group  recommendations always exists. The remaining same-group recommendations can be constructed if the relevant populations have even size, which we assume for convenience.  See \Cref{app:preliminaries} for further explanation.

\section{Equilibrium properties}
\label{sec:equilibrium}

    The equilibria of our game satisfy some useful properties, also potentially of independent interest.  The first is a type of balance condition.  It states that almost all individuals who are connected have similar probabilities of passing each other opportunities. %
    In other words, individuals more-or-less get what they give in their relationships. 
    
   The formal lemma, stated below and proved in the appendix, extends the results of \citet{dwork_equilibria_2024} to our more complex setting with recommendations and two groups. Recall that $\mu_i(d)/d$ is the probability that an individual $i$ of degree $d$ passes one of their connections an opportunity.  Define constant $C(\gamma,k)= 2 (\gamma^{-1} + k)^2(\gamma^{-1} + k + 1)^2$.  
   %

        \begin{lemma} \label{lem:reciprocity-w-recs}
        For all equilibrium  $E \in \mathcal{E}$, all $\varepsilon > 0$, and the constant $C(\gamma,k)$, there exists a set of individuals $S$ with $\abs{S} \geq n - C(\gamma,k)/\varepsilon$ such that: 
        \begin{enumerate}[(a)]
            \item For all individuals $i\in S$ and $j\in N_i(E)$,
            \begin{align}
                \frac{\mu_i(d_i)}{d_i} \geq \frac{{\mu_j(d_j+1)}}{d_j + 1}  -  \varepsilon, \label{eq:genreciprocity}
            \end{align}
            and $j \in S$.
        Also, if $i$ and $j$ are members of the same group and $\varepsilon$ is sufficiently small, then it also holds that ${\mu_i(d_i)}/{d_i} \leq {\mu_j(d_j-1)}/{(d_j-1)} + \varepsilon$.
        \item For all $i \in S$, there exists some $j \in S$ such that $(i,j) \not \in E$, $d_i = d_j$ and $i$ and $j$ are members of the same group (implying that \cref{eq:genreciprocity} holds). 
        \end{enumerate}
        Moreover, if $\varepsilon$ is small enough (i.e., upper bounded by a constant not depending on $n$), the inequalities hold exactly without the additive $\varepsilon$.  
    \end{lemma}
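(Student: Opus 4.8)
The plan is to distill the two equilibrium conditions into a family of purely local inequalities among the per-neighbor giving rates $\mu_i(d_i)/d_i$, and then to stitch these together with a pigeonhole argument that supplies, for almost every vertex, same-group same-degree ``clones'' to act as outside options. Throughout I would use that $\mu_i(d)=\sum_{\ell\ge 1}\min\{\ell-1,d\}p_{i\ell}$ is nondecreasing and concave in $d$, so that $\mu_i(d)/d$ is nonincreasing; this is what lets me compare rates at degrees that differ by one and is the source of the $d_j\pm 1$ shifts in the statement.

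First I would record the single-edge consequences of \cref{eq:nodelete} and \cref{eq:noadd}. Writing $a_j=\mu_j(d_j)/d_j$, $P_i=p_{i0}\prod_{j\in N_i(E)}(1-a_j)$, and $R_i=P_i/(1-a_j)$ for the edge under consideration, a short computation shows that dropping an organic neighbor $j$ changes $u_i$ by $\gamma-R_i a_j$, while adding a non-neighbor $h$ (with no drops) changes $u_i$ by $P_i\,\mu_h(d_h+1)/(d_h+1)-\gamma$; more generally, swapping an organic neighbor for a non-neighbor is profitable precisely when the entrant's marginal rate exceeds the incumbent's rate. Two payoffs follow at once. Applying $R_i a_j\ge\gamma$ (from \cref{eq:nodelete}) together with the telescoping identity $1-\prod_{j\in N_i(E)}(1-a_j)=\sum_j a_j\prod_{j'<j}(1-a_{j'})$ and the inequality $\prod_{j'<j}(1-a_{j'})\ge\prod_{j'\ne j}(1-a_{j'})$ bounds the number of organic edges at $i$ by $\gamma^{-1}$, hence $d_i\le D:=\gamma^{-1}+k$; and every organic edge carries gross benefit at least $\gamma$ to each endpoint.

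Next I would read \cref{eq:genreciprocity} as a market-clearing condition and obtain it by playing clones against each other. Since all degrees lie in $\{0,\dots,D\}$, there are at most $2(D+1)$ \emph{classes} (a group together with a degree), so discarding every vertex whose class has size at most $\tau:=D^2/\varepsilon$ removes at most $2(D+1)\tau$ vertices and leaves each survivor with at least $\tau$ clones (same group, same degree, and --- since $d_i\le D$ --- a non-neighbor one is always available). Part (b) is then immediate, and for two clones \cref{eq:genreciprocity} holds trivially because its right-hand side is at most their common rate. For the reciprocity bound itself, \cref{eq:noadd} applied across the many available clones of $i$ and of its neighbor $j$ forces that neither $i$ nor a clone of $i$ can profitably swap a worst incumbent neighbor for a clone of $j$, and symmetrically; because the abundance of clones lets the ``worst incumbent rate'' be pinned to within $\varepsilon$ of $\mu_i(d_i)/d_i$, this yields the lower bound $\mu_i(d_i)/d_i\ge\mu_j(d_j+1)/(d_j+1)-\varepsilon$ and, for same-group pairs, the matching upper bound against $\mu_j(d_j-1)/(d_j-1)$. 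The clause $j\in S$ asks that $S$ be closed under taking neighbors; I would secure it by additionally deleting any survivor adjacent to a deleted vertex, each deletion spawning at most $D$ further ones, and checking that the total stays within the budget $C(\gamma,k)/\varepsilon=2D^2(D+1)^2/\varepsilon$ afforded by $\tau$ and the degree bound.

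The step I expect to be the main obstacle is reconciling the one-step degree shifts in these swap arguments: an entering clone of $j$ arrives at degree $d_j+1$ rather than $d_j$, and the neighbor being dropped is only controlled, via reciprocity applied one level down, to have rate near $\mu_i(d_i)/d_i$, so the two strict-improvement conditions of a would-be mutual deviation must be balanced against $\pm 1$ discrepancies in the degree argument of $\mu$ --- this is exactly why the clean rate equality $|\mu_i(d_i)/d_i-\mu_j(d_j)/d_j|\le\varepsilon$ is relaxed to the degree-shifted, $\varepsilon$-slack form actually stated, and it is where the bookkeeping is heaviest. Finally, for the exact conclusion I would note that the rates $\mu_\Gamma(d)/d$ take only finitely many values as $\Gamma\in\{G,B\}$ and $d\in\{1,\dots,D\}$ range over a bounded set; letting $\delta>0$ be the least nonzero gap among these values, any $\varepsilon<\delta$ collapses each approximate inequality to an exact one, since both of its sides are drawn from this finite set of rates.
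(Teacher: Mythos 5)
Your overall architecture is right --- degree bound $d_i\le\gamma^{-1}+k$, a bounded number of (group, degree) classes, swap deviations against clones, closure of $S$ under neighbors, and the finite-range argument for the exact inequalities --- and it matches the paper's strategy (which adapts Proposition 4.9 of Dwork et al.). But there is a genuine gap in how you build $S$ for part (a). Your deletions remove only (i) vertices in small classes and (ii) neighbors of deleted vertices; you never remove the violators themselves. The ``abundance of clones'' does not pin down $i$'s own worst incumbent rate, because \cref{eq:noadd} has an either/or structure: for each non-adjacent clone $i'$, it only forces that \emph{one} of $i,i'$ fails to profit from the swap. If $i$'s worst neighbor is bad, the equilibrium condition is perfectly consistent with every non-adjacent clone $i'$ having only good neighbors (so that $i'$ is the side that declines), and then $i$ survives all of your deletions while violating \cref{eq:genreciprocity}. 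The fix is the clique argument you gesture at but do not execute: any two non-adjacent same-class vertices that \emph{both} have a bad worst neighbor would both strictly profit from swapping into each other, contradicting \cref{eq:noadd}; hence the violators within each class are pairwise adjacent, so there are at most $d+1$ of them per class, and these (together with their at most $\gamma^{-1}+k$ neighbors each) must be explicitly added to the deletion budget. That counting --- not the small-class pigeonhole --- is what produces the paper's $C(\gamma,k)/\varepsilon$ bound.

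A second, smaller gap: you assert the same-group upper bound $\mu_i(d_i)/d_i \le \mu_j(d_j-1)/(d_j-1)+\varepsilon$ as a ``matching'' consequence of the swap argument, but it does not fall out of a single deviation (severing $j$ only yields $\gamma \le R_i\,\mu_j(d_j)/d_j$, which is a different inequality). The paper derives it by starting from the reverse reciprocity inequality $\mu_i(d_i+1)/(d_i+1)\le\mu_j(d_j)/d_j+\varepsilon$ (available because $j\in S$), rearranging, and splitting into the cases $d_j\le d_i+1$ and $d_j>d_i+1$, with the second case ruled out by contradiction for small $\varepsilon$. You would need to supply that argument or an equivalent one; the monotonicity and concavity of $\mu$ that you invoke are necessary for it but not sufficient on their own.
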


    Intuitively, part (a) of \Cref{lem:reciprocity-w-recs} says that each individual $i$ participating in a connection $(i,j)$ in equilibrium satisfies an inequality where their connection $j$ has probability of passing them an opportunity no less than the probability $i$ would pass $j$ an opportunity 
    if they had degree one larger. 
    It additionally says that, for connected members of the same group, each individual exactly satisfies the condition that they have probability of passing their connection an opportunity no more than if their connection had one fewer connection, respectively.\footnote{This condition is not trivially satisfied by the fact that $\mu_j(d_j)/d_j \geq \mu_i(d_i + 1)/(d_i + 1) - \varepsilon$ since the $d_i, d_j$ are determined by {equilibrium conditions}, and may not continue to hold if individuals' degrees change.}
    Part (b) further states that each individual has an outside option: they \textit{could} be (but are not) connected to someone else of their same degree with an approximately equal expected number of extra opportunities.

    The result above is quantified by a slack term $\varepsilon$.
    It holds for all $\varepsilon$, but there is a trade-off between the tightness of \cref{eq:genreciprocity} and the size of $S$.
    Making $\varepsilon$ smaller makes the inequality tighter but decreases the size of the set $S$.
    Making $\varepsilon$ larger makes the inequality looser but increases the size of the set $S$.
    Depending on the situation, it may be desirable to choose different values of $\varepsilon$, but the choice of $\varepsilon$ is \textit{not} a parameter influencing the set of possible equilibria.
    Instead, it is a analyst-defined choice for finding a $n - O(1)$ (as long as $\varepsilon$ is not dependent on $n$) set of individuals in any equilibrium who satisfy the balance conditions in the lemma.

    We also observe that by the fact that $j \in S$, \cref{eq:genreciprocity} holds when we switch the roles of $i$ and $j$. That is, $\mu_j(d_j)/d_j \geq \mu_i(d_i+1)/(d_i + 1) - \varepsilon$.
    Together, \cref{eq:genreciprocity} and the inequality above ensure that all neighbors of $i$ satisfy a balance condition ensuring that neither individual in a connection will have substantially lower probability of passing the other an opportunity.

    It is also not hard to see that the form of the utility function as well as the equilibrium conditions imply the following two facts:
    \begin{fact}
    \label{lem:all-edges-form}
        In any equilibrium $E \in \cE$, all recommended edges form, \emph{i.e.}, $Q \subseteq E$.
    \end{fact}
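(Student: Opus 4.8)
The plan is to derive the inclusion $Q \subseteq E$ from the no-add condition \cref{eq:noadd}, applied to the single deviation in which a missing recommended edge is added while no existing edges are dropped. Suppose toward a contradiction that $E \in \cE$ but some recommended edge $(i,j) \in Q$ satisfies $(i,j) \notin E$. I would instantiate \cref{eq:noadd} for this pair with $S_i = S_j = \emptyset$; since that condition must hold for \emph{all} choices of drop sets, it then suffices to exhibit this one deviation under which both endpoints strictly gain, i.e.\ to show $u_i(E \cup \curly{(i,j)}) - u_i(E) > 0$ and $u_j(E \cup \curly{(i,j)}) - u_j(E) > 0$. This makes both disjuncts of \cref{eq:noadd} false, contradicting $E \in \cE$.

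Next I would compute the gain to $i$ directly from \cref{eq:utilformula}. Two observations drive the calculation. First, because $(i,j) \in Q_i$, the new edge is recommended and hence does not enter the organic-cost term $\gamma \cdot \abs{N_i \setminus Q_i}$, so the cost is unchanged. Second, adding $j$ to $N_i(E)$ only appends the single factor $\paren{1 - \mu_j(d_j+1)/(d_j+1)}$ to the product in \cref{eq:utilformula}: the degrees of $i$'s other neighbors are unaffected, $i$'s own degree does not appear in $u_i$, and $j$'s degree rises from $d_j$ to $d_j+1$. Collecting terms gives
\begin{align*}
u_i(E \cup \curly{(i,j)}) - u_i(E) = p_{i0}\paren{\prod_{j' \in N_i(E)}\paren{1 - \frac{\mu_{j'}(d_{j'})}{d_{j'}}}}\frac{\mu_j(d_j+1)}{d_j+1},
\end{align*}
and a symmetric expression holds for $j$'s gain.

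It then remains to argue each of the three factors is strictly positive. The factor $\mu_j(d_j+1)/(d_j+1)$ is positive because the standing assumption that $\sum_{\ell \geq 1}(\ell-1)g_\ell$ and $\sum_{\ell\geq 1}(\ell-1)b_\ell$ are positive puts positive mass on some $\ell \geq 2$, for which $\min\curly{\ell - 1, d_j + 1} \geq 1$. Each factor $1 - \mu_{j'}(d_{j'})/d_{j'}$ in the product is positive because $\mu_{j'}(d_{j'}) \leq d_{j'}(1 - p_{j'0}) < d_{j'}$ whenever $p_{j'0} > 0$. The one genuinely delicate point — and the main obstacle — is the leading factor $p_{i0}$: the gain is strict only if $i$ (and, symmetrically, $j$) has positive probability of receiving no exogenous opportunity. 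This is exactly where I would invoke the model assumption that $b_0$ and $g_0$ are strictly positive (equivalently $p_{i0} > 0$ for every individual); an individual guaranteed an exogenous opportunity is merely indifferent to any additional connection, so in that degenerate case the claim would instead rely on the convention that costless, weakly beneficial, mutually agreeable links are formed. Under the positivity assumption both gains are strictly positive, contradicting \cref{eq:noadd}, and therefore $Q \subseteq E$.
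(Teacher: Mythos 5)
Your proof is correct and is essentially the argument the paper has in mind: the paper explicitly omits the proof of this fact as elementary, and the intended reasoning is exactly your computation that adding a missing recommended edge is costless and yields a strictly positive marginal gain $p_{i0}\bigl(\prod_{j' \in N_i(E)}(1 - \mu_{j'}(d_{j'})/d_{j'})\bigr)\mu_j(d_j+1)/(d_j+1)$ to both endpoints, violating the no-add condition \eqref{eq:noadd}. Your observation about the degenerate case $p_{i0}=0$ is a fair caveat, but it is moot under the paper's standing parameter regime (all results require $g_0>0$ and $b_0$ bounded away from zero).
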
 

    \begin{fact}
        In any equilibrium $E\in\cE$, all individuals have bounded degree. In particular, $d_i\leq1/\gamma + k$ for all $i\in[n]$. 
    \end{fact}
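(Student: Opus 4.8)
The plan is to exploit the single-edge deletion condition \cref{eq:nodelete} together with a submodularity (telescoping) argument. First I would invoke \Cref{lem:all-edges-form}: since every recommended edge forms, $Q_i \subseteq N_i$, so the number of \emph{organic} (cost-bearing) neighbors of $i$ is exactly $\abs{N_i \setminus Q_i} = d_i - k$. It therefore suffices to show $d_i - k \le 1/\gamma$. To set up the accounting, for a neighbor set $S \subseteq N_i$ write $q_\ell = \mu_\ell(d_\ell)/d_\ell \in [0,1]$ for the probability that $\ell$ passes $i$ an opportunity, and define
\[
f(S) = 1 - p_{i0}\prod_{\ell \in S}(1 - q_\ell),
\]
the probability that $i$ receives an opportunity when connected to exactly the neighbors in $S$ (holding the remaining degrees fixed). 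By construction $f$ is a probability, so $0 \le f(S) \le 1$ for every $S$, and by \cref{eq:utilformula} we have $u_i(E) = f(N_i) - \gamma(d_i - k)$.

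Next I would read off what the no-delete condition says for a single organic edge. Deleting $(i,j)$ with $j \in N_i \setminus Q_i$ removes $j$ from the product and saves exactly one unit of cost, so \cref{eq:nodelete} rearranges to $f(N_i) - f(N_i \setminus \{j\}) \ge \gamma$; that is, the marginal benefit of $j$ measured against the \emph{full} neighbor set is at least $\gamma$. (For a recommended $j$ the same computation only gives $\ge 0$, which is why recommended edges impose no degree constraint, as expected since they are free.)

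The main step -- and the only genuine obstacle, since \cref{eq:nodelete} licenses removing one edge at a time rather than all organic edges at once -- is to combine these $m \defeq d_i - k$ inequalities into a bound that does not grow with $m$. Naively each marginal only gives $\gamma \le q_j \le 1$, which is useless. The key observation is that $f$ is submodular: a direct computation gives $f(S \cup \{j\}) - f(S) = p_{i0}\, q_j \prod_{\ell \in S}(1 - q_\ell)$, which is nonincreasing in $S$ because each factor $(1-q_\ell)$ lies in $[0,1]$. Ordering the organic neighbors $j_1,\dots,j_m$ and setting $S_t = (N_i \cap Q_i) \cup \{j_1,\dots,j_t\}$, submodularity (base sets $S_{t-1} \subseteq N_i \setminus \{j_t\}$) gives $f(N_i) - f(N_i \setminus \{j_t\}) \le f(S_t) - f(S_{t-1})$, so the per-edge marginals telescope:
\[
m\gamma \;\le\; \sum_{t=1}^{m}\big(f(N_i) - f(N_i \setminus \{j_t\})\big) \;\le\; \sum_{t=1}^{m}\big(f(S_t) - f(S_{t-1})\big) \;=\; f(N_i) - f(S_0) \;\le\; 1,
\]
using $f(N_i) \le 1$ and $f(S_0) \ge 0$. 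Dividing by $\gamma$ yields $d_i - k = m \le 1/\gamma$, i.e.\ $d_i \le 1/\gamma + k$, as claimed. The whole argument uses only the deletion half of the equilibrium definition, so I would present it immediately after \Cref{lem:all-edges-form} without needing the reciprocity lemma; the only care required is to keep the marginals in product form (never dividing by $1-q_j$) so the boundary case $q_j = 1$ causes no trouble.
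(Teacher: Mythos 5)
Your proof is correct, and it takes a subtly but genuinely different route from the paper's. The paper treats this as an elementary fact and omits the argument; where the bound is actually used (inside the proof of \Cref{lem:reciprocity-w-recs}) the paper simply asserts that equilibrium utility is nonnegative, so that $0 \le u_i(E) \le 1 - \gamma(d_i - k)$ gives $d_i \le \gamma^{-1} + k$ in one line. That assertion implicitly appeals to an ``abandon all organic edges'' deviation, which is not literally licensed by the DFPN conditions: condition \eqref{eq:nodelete} only permits severing one edge at a time, and condition \eqref{eq:noadd} only permits severing a set when paired with adding a new edge. Your submodularity-and-telescoping argument closes exactly this gap: you extract $f(N_i) - f(N_i\setminus\{j\}) \ge \gamma$ from each single-edge no-delete constraint, use the monotone-decreasing marginals of $f(S) = 1 - p_{i0}\prod_{\ell\in S}(1-q_\ell)$ to dominate these by a telescoping sum bounded by $f(N_i) - f(S_0) \le 1$, and conclude $\gamma(d_i-k)\le 1$ --- which also yields $u_i(E) \ge f(S_0) \ge 0$ as a corollary, i.e.\ the individual-rationality property the paper takes for granted. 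The paper's route is shorter if one grants nonnegativity of utility; yours buys a derivation from the stated equilibrium conditions alone, at the cost of the (routine) submodularity bookkeeping. The one point to keep explicit, which you do, is that deleting $(i,j)$ leaves the remaining neighbors' degrees, and hence their passing probabilities $q_\ell$, unchanged, so $f$ evaluated on subsets of $N_i$ with fixed $q_\ell$ is the right object to telescope.
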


    %

    We omit the proofs of these elementary facts and remark that, although we assume that recommendations eliminate connection costs in our model, we could relax this modeling choice and assume instead that recommendations only \textit{reduce} (but do not eliminate) the cost of connections.
    Our results in this paper continue to hold as long as recommended connection costs are sufficiently small (\emph{i.e.}, small enough that members of the privileged group can be incentivized to form connections with members of the unprivileged group in equilibrium). 
    We also note that central planners might extract even greater social benefits in cases where they can subsidize links by more than the connection cost, yielding payments (negative costs) for links formed.
    We leave exploration of these topics for future work.


    Finally, we establish that there always exists an equilibrium, as long as exogenous inequality is large enough. 

    \begin{fact} \label{fact:eqbm-exists}
        For all $g, b, k, \rho$ and large enough $n$ such that $\gamma > b_0(1-b_0 - b_1)$, it holds $\cE \neq \varnothing$.
    \end{fact}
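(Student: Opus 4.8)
The plan is to exploit the hypothesis $\gamma > b_0(1-b_0-b_1)$ to show that no organic edge touching a blue vertex can ever be part of a profitable deviation, thereby reducing the existence question to constructing a stable organic network among the green individuals alone, for which I would adapt the single-group existence argument of \citet{dwork_equilibria_2024}.

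First I would record a monotonicity observation: since $d \mapsto \mu_B(d) = \sum_{\ell\ge 1}\min\{\ell-1,d\}b_\ell$ is concave and nondecreasing with $\mu_B(0)=0$, the ratio $\mu_B(d)/d$ is nonincreasing in $d$ and hence maximized at $d=1$, where it equals $\mu_B(1) = \sum_{\ell\ge2}b_\ell = 1-b_0-b_1$. Consequently, for any individual $i$ and any blue $j$, the marginal gain to $i$ of adding the edge $(i,j)$ is at most $p_{i0}\cdot \mu_B(d_j)/d_j \le b_0(1-b_0-b_1) < \gamma$, where I use $p_{i0}\le b_0$ (which follows from stochastic dominance, as $g_0\le b_0$) and that the product over $i$'s other neighbors in \cref{eq:utilformula} is at most $1$. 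Because this bound is independent of which other edges $i$ simultaneously drops, adding a blue-incident edge strictly lowers $i$'s utility in every deviation. Hence condition~\eqref{eq:noadd} is automatically satisfied for every non-edge with a blue endpoint, and in any equilibrium the only organic edges are green-green.

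With blue vertices carrying only their (free, always-forming by \Cref{lem:all-edges-form}) recommended edges, the remaining task is to build an organic green-green graph that is stable against green deviations. I would treat the recommended edges as a fixed background that fixes each green's baseline degree, and then top up each green with organic green-green edges so that total degrees sit near the common target $d^\ast$ at which the marginal benefit of a green-green link crosses $\gamma$; concretely, I would realize a near-regular degree sequence on the greens via a Havel--Hakimi / configuration-type construction that avoids already-recommended neighbors. Near-regularity at the threshold degree makes the balance conditions of \Cref{lem:reciprocity-w-recs} hold, which gives both halves of the equilibrium check: at the target degree the marginal value of an extra green link is below $\gamma$ (so \eqref{eq:noadd} holds, now among greens), while each organic link retained has marginal value above $\gamma$ (so \eqref{eq:nodelete} holds); greens whose recommended degree already exceeds $d^\ast$ simply add nothing and trivially do not wish to add more.

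I expect the main obstacle to be the combination of (i) realizability and discreteness --- the target degree $d^\ast$ need not be exactly achievable for every green given its fixed recommended degree and the forbidden-neighbor constraints, so I must argue that a near-regular graph exists and that the $O(1)$ off-target vertices do not destabilize the configuration (this is exactly where the hypothesis ``large enough $n$'' and the slack $\varepsilon$ of \Cref{lem:reciprocity-w-recs} are used), and (ii) the strength of the defection notion, since condition~\eqref{eq:noadd} permits a pair to add a link while each simultaneously severs an arbitrary subset of existing links. For (ii) I would verify that, because every retained green link clears the $\gamma$ threshold, no simultaneous severing can make a new green-green link worthwhile, reducing combined add/drop deviations to the single-edge checks already handled.
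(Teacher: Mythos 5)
Your first step is correct and matches the paper's \Cref{lem:no-blue-edges}: the hypothesis $\gamma > b_0(1-b_0-b_1)$, together with $\mu_B(d)/d \le \mu_B(1) = 1-b_0-b_1$ and $p_{i0}\le b_0$, rules out every organic edge incident to a blue vertex, reducing the problem to stabilizing a green--green organic graph on top of the recommended edges. The paper's construction is also a regular graph on the greens, so you have the right shape. But there is a genuine gap at the heart of your construction: you speak of ``the common target $d^\ast$ at which the marginal benefit of a green-green link crosses $\gamma$'' as if there were a single threshold, when in fact the no-delete condition \eqref{eq:nodelete} and the no-add condition \eqref{eq:noadd} at a common degree $d_G$ give two \emph{different} thresholds --- roughly $g_0\,\mu_G(d_G)/d_G\cdot(1-\mu_G(d_G)/d_G)^{d_G-\cdot}$ versus $g_0\,\mu_G(d_G+1)/(d_G+1)\cdot(1-\mu_G(d_G+1)/(d_G+1))^{d_G-\cdot}$ --- and it is not automatic that an \emph{integer} $d_G$ exists with $\gamma$ sandwiched between them. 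The paper's proof is essentially devoted to exactly this point: it shows the intervals \eqref{eq:gammalower}--\eqref{eq:gammaupper} indexed by $d_G\ge k+1$ overlap consecutively and their union covers the entire range of admissible $\gamma$ (together with the boundary case where $\gamma$ is so large that the $k$-regular recommended graph is already an equilibrium). Your proposal never supplies this argument, and without it the existence of a stabilizing degree is unproven.

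A second, related problem is your appeal to \Cref{lem:reciprocity-w-recs} to ``give both halves of the equilibrium check.'' That lemma states \emph{necessary} conditions satisfied by equilibria (for all but $O(1)$ individuals); it is not a sufficient condition you can verify on a candidate graph to conclude it is an equilibrium. Equilibrium requires \eqref{eq:noadd} and \eqref{eq:nodelete} to hold for \emph{every} pair and \emph{every} individual, so your plan of tolerating $O(1)$ off-target vertices in a near-regular Havel--Hakimi construction does not work: a single green with degree below the stable range would want to add a link, and its prospective partner (also below target, or indifferent) could accept, breaking stability. The paper sidesteps both issues by building an \emph{exactly} $d_G$-regular green subgraph (using the parity assumptions and large $n$), so that all greens are interchangeable and the two displayed inequalities in $d_G$ are the entire equilibrium check; your point (ii) about combined add/drop deviations is then handled correctly as you describe, but only once exact regularity and the existence of a valid integer $d_G$ are in place.
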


    We can construct equilibria in our model by letting each individual in a group have the same degree as every other member of their group.
    Individuals in different groups will have different degrees (according to their privilege) distributions.
    Setting $\gamma > b_0(1-b_0 - b_1)$ ensures that members of $B$ will only form connections through recommendations, which considerably simplifies the analysis and conforms with the parameter regime considered in this work (where members of $B$ are sufficiently unprivileged).\footnote{We remark that our model in this paper concerns \textit{economically meaningful} connections, \textit{i.e.}, connections that carry with them some possibility that the individuals incident to the connection benefit. Our intuition is that the number of economically meaningful connections an individual has may be much lower than their number of ``connections'' on professional networking platforms might indicate. Thus, depending on the situation, the number of connections in our model may or may not exactly correspond to connections as defined by professional networking platforms. In general, the assumption that individuals must be sufficiently unprivileged so as to form few organic connections is an analytical convenience that is not necessary for the results to hold.}
    We leave for future work the question of whether there exist equilibria for any possible setting of the parameters.

    \section{Inequality analysis}
    \label{sec:inequality}

    We next explore inequality at equilibrium in our model. 
    In \Cref{sec:faciallyneutral}, we show how the existence of a platform can result in increased inequality, even without {pre-existing} homophily in the network formation process.
    Then, in \Cref{sec:cross-group-recs}, we analyze the ability for platforms to ameliorate the extra inequality they created by selectively subsidizing cross-group links.
    
    \subsection{Facially neutral platforms increase inequality.} \label{sec:faciallyneutral}

    Here, we analyze the network formation process under a platform does that not subsidize any connections via recommendations, \emph{i.e.}, $k=0$.
    We call these ``facially neutral'' because they do not intervene on behalf of any particular connections.
    As we will see, facially neutral platforms increase inequality.
    To quantify inequality, we will use a simple with-in group average. Formally, for an edge set $E$, define the \textit{utility ratio} between greens and blues as
    \begin{align*}
        \eur(E) = \frac{{\abs{G}}^{-1}\sum_{i \in G} {u_i(E)}}{{\abs{B}}^{-1} \sum_{i \in B} {u_i(E)}}
    \end{align*}
    %
    We will want to compare this quantity to the exogenous utility ratio, \emph{i.e.}, the ratio that would be obtained in the absence of the platform and hence any networking:
    \begin{align*}
        \eur(\varnothing) \defeq \frac{1 - g_0}{1-b_0}.
    \end{align*}
    %
    %
    We will sometimes refer to $\eur(\varnothing)$ as the \textit{exogenous utility ratio} and $\eur(E)$ for $E \in \cE$ as the \textit{endogenous utility ratio}.  A high utility ratio implies the privileged group fares much better than the unprivileged one, indicating high inequality.  A utility ratio of $1$ indicates equality of group welfare.

    Our simple result in this section states that for exogenous inequality sufficiently large, the existence of a facially neutral platform only makes inequality worse and that, moreover, only the privileged group benefits from the platform.  This holds for exogenous opportunity distributions where green members have a much larger chance of receiving extra exogenous opportunities, which is natural given their privilege. We also need green members to have some lower-bounded probability of receiving no exogenous opportunities. 

    \begin{proposition}
    \label{prop:ineqworse}
          Suppose $k=0$ and there exist at least two green and one blue individual(s). Then for any exogenous opportunity distributions $g$ and $b$ such that $b_0  (1-b_0-b_1)< g_0(1-g_0-g_1) $ and edge cost $\gamma \in (  b_0  (1-b_0-b_1), g_0(1-g_0-g_1))$,  
          \begin{align*}
              \eur(E) > \eur(\varnothing).
          \end{align*}
          Moreover, the benefits of the platform accrue exclusively to the privileged group. That is, if for some $i \in [n]$ it holds that ${u_i(E)} > 1 - p_{i0}$, then $i \in G$.
    \end{proposition}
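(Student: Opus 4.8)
The plan is to first pin down the structure of any equilibrium $E$ when $k=0$ (so $Q=\varnothing$ and every neighbor is an organic, cost-$\gamma$ connection), and then read off both conclusions. The workhorse observation is a monotonicity sub-lemma: for every individual the per-neighbor passing probability $\mu_i(d)/d$ is non-increasing in $d$, hence maximized at $d=1$, where it equals $\mu_i(1)=1-p_{i0}-p_{i1}$. This follows from a one-line comparison of $\min\{\ell-1,d\}$ across consecutive degrees, and I would record it first.

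Next I would show that \emph{no blue individual forms any edge}. For a hypothetical blue--blue edge $(i,j)$, the no-delete condition \eqref{eq:nodelete} for $i$ gives $b_0\,q_j\prod_{\ell\in N_i\setminus\{j\}}(1-q_\ell)\ge\gamma$, where $q_\ell:=\mu_\ell(d_\ell)/d_\ell$; bounding the product by $1$ and applying the sub-lemma gives $\gamma\le b_0(1-b_0-b_1)$, contradicting the hypothesis $\gamma>b_0(1-b_0-b_1)$. For a blue--green edge $(i,j)$ with $i\in B$, I instead apply no-delete from the \emph{green} side $j$: $g_0\,q_i\prod_{\ell\in N_j\setminus\{i\}}(1-q_\ell)\ge\gamma$ forces $\gamma\le g_0(1-b_0-b_1)$, and since stochastic dominance gives $g_0\le b_0$, this again contradicts $\gamma>b_0(1-b_0-b_1)$. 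Hence every blue individual is isolated, so $u_i(E)=1-b_0=1-p_{i0}$ for all $i\in B$. This immediately yields the second claim (any $i$ with $u_i(E)>1-p_{i0}$ must be green), shows the denominator of $\eur(E)$ equals $1-b_0$, and implies every remaining edge is green--green.

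It then remains to prove $|G|^{-1}\sum_{i\in G}u_i(E)>1-g_0$. I would first establish the easy direction $u_i(E)\ge 1-g_0$ for every green: combining the no-delete inequalities $g_0\,q_j\prod_{\ell\neq j}(1-q_\ell)\ge\gamma$ over all of $i$'s edges with the submodularity inequality $1-\prod_j(1-q_j)\ge\sum_j q_j\prod_{\ell\neq j}(1-q_\ell)$ shows the total benefit of $i$'s connections dominates their total cost $\gamma d_i$. Since each green's utility is at least the exogenous $1-g_0$, it suffices to exhibit \emph{one} green with strictly larger utility. The equilibrium is nonempty: if $E=\varnothing$, any two greens (which exist) would both strictly gain $g_0(1-g_0-g_1)-\gamma>0$ from connecting, violating no-add \eqref{eq:noadd}; combined with blues being isolated, there is therefore at least one green--green edge.

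The main obstacle is this final strictness step, because the per-edge no-delete inequalities are only weak and could a priori be tight. I would resolve it by a short case analysis on degrees. If some connected green has degree $\ge 2$, it has at least two green neighbors, all with strictly positive passing probability, so the submodularity inequality above is \emph{strict}; adding this strict slack to the nonnegative per-edge no-delete slacks gives $u_i(E)>1-g_0$. Otherwise every connected green has degree exactly $1$, so the green edges form a matching of degree-$1$ pairs, and for such a green $i$ with neighbor $j$ one computes $u_i(E)-(1-g_0)=g_0\mu_j(1)-\gamma=g_0(1-g_0-g_1)-\gamma$, which is strictly positive by the standing hypothesis $\gamma<g_0(1-g_0-g_1)$. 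Either way some green strictly exceeds $1-g_0$, so the green average does too, and dividing by $1-b_0$ gives $\eur(E)>(1-g_0)/(1-b_0)=\eur(\varnothing)$.
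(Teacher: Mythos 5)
Your proof is correct and follows essentially the same route as the paper's: blue individuals form no edges because the value they offer a neighbor is at most $b_0(1-b_0-b_1)<\gamma$, the empty network is not an equilibrium because two greens would both gain $g_0(1-g_0-g_1)-\gamma>0$ from connecting, and hence the green average strictly exceeds $1-g_0$ while every blue utility equals $1-b_0$. The one place you go beyond the paper is the final strictness step --- the paper simply asserts that the average green utility must be strictly higher in a non-empty equilibrium, whereas you derive $u_i(E)\ge 1-g_0$ from the single-edge no-delete conditions in \eqref{eq:nodelete} together with the ``at least one vs.\ exactly one'' inequality, and then obtain strictness for some green by the degree case analysis; this fills a genuine gap in rigor without changing the argument's structure.
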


    \begin{proof}
        We argue for any $(i,j)\in E\in \cE$, both $i$ and $j$ are from the privileged green group.  First note blue group individuals $i$ have probability at most $b_0(1-b_0-b_1)$ of providing an exogenous opportunity to any neighbor.  As this is less than the edge cost, no edges form to blue group individuals and hence their utility is the same in any equilibrium network as it is in the empty network. Now suppose for contradiction that no edges form between green group individuals either.  Consider two green individuals $i$ and $j$.  By forming the edge $(i,j)$, individual $i$ can obtain utility $u_i(\{(i,j)\}) = (1-g_0(1-g_0-g_1))-\gamma$ since they only fail to receive an opportunity if they themselves receive none (probability $g_0$) and $j$ receives no extra opportunity (probability $g_0+g_1$).  This is greater than their utility $1-g_0$ from the empty network so long as $\gamma<g_0(1-g_0-g_1)$. Since the situation is symmetric, $j$ also wants to form edge $(i,j)$ and hence the empty network is not an equilibrium.  Since any green can guarantee utility $1-g_0$, the average green utility in a non-empty equilibrium network must be strictly higher than that in an empty network, implying the strict inequality in the theorem statement.
    \end{proof}

    The conditions in \Cref{prop:ineqworse} are sufficient but not necessary for inequality to increase in the model: 
    Under other relations between $b$ and $g$ the platform may still induce more inequality relative to their exogenous levels.
    
    We illustrate \Cref{prop:ineqworse} in  \Cref{fig:inequality_ratio_comparison_k0}.
    In the figure, we vary the exogenous utility ratio $\eur(\varnothing) = (1 - g_0) / (1-b_0)$ on the horizontal axis and a lower bound (in blue) and upper bound (in red) on the endogenous utility ratio $\eur(E), E \in \cE$ on the vertical axis.
    Throughout this work, in each of the plots, bounds on utilities are asymptotic (as $n \to \infty$) and set $\abs{G} = \abs{B}$. 
    The non-asymptotic bounds are within additive $O(n^{-1})$ of the asymptotic bounds. 
    For simplicity, we also set $p_{i\ell }= 0$ for all $\ell \not\in \{ 0, 2 \}$ and $i \in [n]$ so that exogenous opportunity distributions collapse to a one-dimensional parameter.   %
    The black dashed line indicates where exogenous and endogenous utility ratios are equal. 
    Anything above the line indicates greater endogenous utility than exogenous utility.
    Each line in the plot corresponds to a different value of $g_0$.
    Thus, for each line, moving right along the horizontal axis corresponds to increasing the value of $g_0$.
    Discontinuities in the plot occur as a result of changes in our bounds for the set of feasible equilibria as we sweep over different parameter values.\footnote{As the value of $b_0$ increases sweeping to the right on the horizontal axis, the bounds on the set of degrees for members of both groups change.
    Since degree bounds are integer values, changes to the degree bounds lead to discontinuous changes in functions of the degree bounds, such as the endogenous utility ratio on the vertical axis.
    In particular, consider $g_0 = 0.25$, around the $\eur(\varnothing) \in [7.5,10]$ where the upper bound discontinuously jumps up and then down again.
    Before the step up, members of $B$ can only have degree $2$ in equilibrium.
    At the step up, they may have degree $2$ \textit{or} $1$.
    After the step down, they may only have degree $1$.
    Each of these changes lead to significant changes to the endogenous utility ratio, since the utilities of members of $B$ is already small for these parameter settings and changes to their utility can have large effects through the denominator of the endogenous utility ratio. Hence, we see the sharp jump up and then down again.
    The changes in degree bounds for the lower bound on the endogenous utility ratio in the same range (and for all of the other ``spikes'' in \Cref{fig:inequality_ratio_comparison_k0}) lead to the corresponding discontinuous step down and then up.}

    As we can see in \Cref{fig:inequality_ratio_comparison_k0}, for sufficiently large values of $\eur(\varnothing)$, the lower bound on inequality for each value of $g_0$ becomes larger than the break-even value where $\eur(\varnothing) = \eur(E), E \in \cE$.
    Thus, for all equilibria, the endogenous utility ratio becomes greater than the exogenous utility ratio.
    Depending on the value of $g_0$, the point at which $\eur(E) > \eur(\varnothing)$ for all $E \in \cE$ occurs at different points.
    For $g_0$ small (lighter shades), it occurs at higher values of inequality: in this plot, when $g_0 = 0.25$, this happens close to $\eur(\varnothing) = 20$.
    For $g_0$ large (darker shades of red and blue), this happens at low values of inequality: in this plot, when $g_0 = 0.75$, it occurs at $\eur(\varnothing) < 5$.
    The reason for this is that, when $g_0$ is low, $b_0$ is also relatively low, so members of $B$ may form more connections at low values of $\eur(\varnothing)$, resulting in inequality lower bounds below the break-even line.
    When $g_0$ is higher, $b_0$ is also higher, so members of $B$ form few or no connections at low values of $\eur(\varnothing)$, and thus only members of $G$ benefit from the network, leading to increased endogenous utility ratio.

    \begin{figure}
        \centering
        \includegraphics[width=\linewidth]{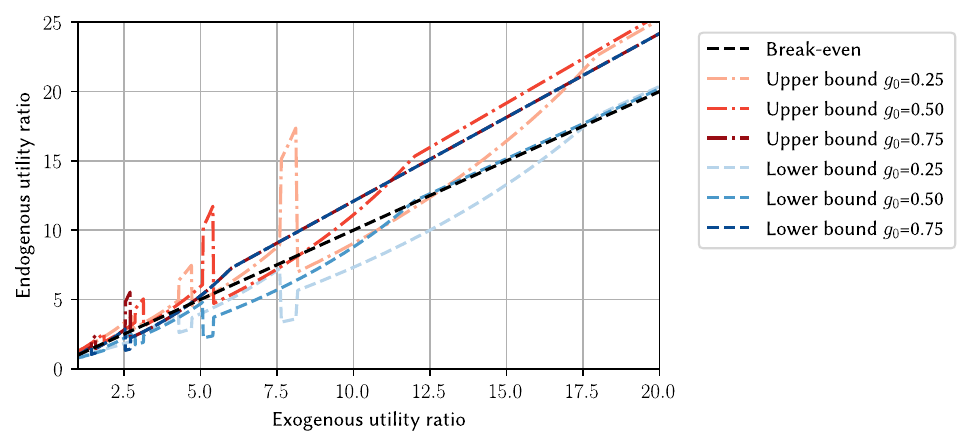}
        \caption{For large enough exogenous utility ratio, a network without link recommendations increases inequality, relative to exogenous levels. In the plot above, for each line, we fixed $g$ and vary $b$ plotting the value of $\eur(\varnothing) = (1-g_0)/(1-b_0)$ on the horizontal axis. On the vertical axis, we plot upper bounds (in red) and lower bounds (in blue) for the utility ratio $(\eur(E), E \in \cE)$ achieved for all equilibria at these parameter values. We set $\gamma = 0.04$.}
        \label{fig:inequality_ratio_comparison_k0}
    \end{figure}

    Before we conclude this subsection, it is worth comparing \Cref{prop:ineqworse} to existing results on inequality in the referrals and labor markets literature.
    An important point of departure between our work and prior work is that there are no exogenous in-group preferences between individuals.
    Exogenous in-group preferences are key to the results in \citet{bolte_role_2020} and \citet{okafor_seeing_2022}, where increased inequality emerges from the information advantage the privileged group has by virtue of receiving more referrals and the draining of opportunities from a minority group to the majority group, respectively.
    Increased inequality in our model emerges even without exogenous homophily.
    Instead, increased inequality emerges from the fact that individuals with greater privilege also have more leverage in forming connections. 
    In particular, the equilibrium conditions \cref{eq:noadd,eq:nodelete} imply that, in a sufficiently large population, almost all individuals can connect to others who are at least as valuable to them as they are to others.
    This, to our knowledge, a novel formalization for how inequality can emerge in a professional network and suggests that the existence of facially neutral platforms may still increase inequality, even if individuals' in-group biases can be eliminated.
    Of course, additionally introducing exogenous homophily in our model would only induce greater inequality at equilibrium.
    We leave deeper exploration of the interactions between exogenous homophily and the greater leverage of privileged individuals for future work.

    \subsection{Cross-group recommendations reduce inequality.} \label{sec:cross-group-recs}

    We now consider a platform that intervenes in the network formation process by selectively subsidizing connections, \emph{i.e.}, choosing $k > 0$ and making $k$ edge recommendations per person.

    In our first result in this section, \Cref{prop:cg_ineq}, we establish that for exogenous utility ratio sufficiently large, if a platform recommends all cross-group link recommendations, then inequality in any equilibrium will be strictly \textit{less} than exogenous levels.
    Together with \Cref{prop:ineqworse}, this implies that a policy recommending cross-group links yields lower inequality, for all equilibria, than any equilibrium induced by a platform that does not make link recommendations.
    We formalize this statement in \Cref{corr:cg_ineq}.
    Recall that we denote the fraction of cross-group connections recommended to $B$ as $\rho$ and the corresponding fraction for members of $G$ as $\rho \abs{B}/\abs{G}$, where we assume that $\rho$ is set so as to ensure $\rho \abs{B}/\abs{G} \leq 1$.
    In this section, we choose $\rho = 1$, so all edge recommendations are cross-group. We use $\cE_k$ to denote the set of all equilibria induced by any $k$ cross-group recommendations per person.

    \begin{proposition}
    \label{prop:cg_ineq}
        For any $k>0$, there exist constants $\munderbar{b}_0 < 1, \munderbar{n}$ such that for all $n \geq \munderbar{n}$, $g$, and $b$ satisfying $b_0 \geq \munderbar{b}_0$, there exists constant $\bar \gamma > 0$ such that for all edge cost $\gamma \in (0, \bar \gamma)$ and all equilibria $E \in \cE_k$, it holds 
        \begin{align*}
          \eur(E) < \eur(\varnothing).
        \end{align*}
    \end{proposition}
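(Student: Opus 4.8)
The plan is to turn the ratio inequality into a comparison of each group's per-capita utility against its own exogenous baseline. Writing $\bar u_G = \abs{G}^{-1}\sum_{i\in G}u_i(E)$ and $\bar u_B = \abs{B}^{-1}\sum_{i\in B}u_i(E)$, the claim $\eur(E) < \eur(\varnothing)$ is, after clearing the (positive) denominators, exactly $\bar u_G(1-b_0) < \bar u_B(1-g_0)$, i.e. $\tfrac{\bar u_G}{1-g_0} < \tfrac{\bar u_B}{1-b_0}$. So it suffices to exhibit an absolute upper bound on the green proportional gain $\bar u_G/(1-g_0)$ and a lower bound on the blue proportional gain $\bar u_B/(1-b_0)$ that can be pushed above it by taking $b_0$ close to $1$.

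For the green side, I would bound $u_i(E)\le 1$ for every green $i$, and more sharply observe that a green's networking gain is controlled by the expected number of opportunities it can receive: its green neighbors contribute a bounded expected inflow, while each of its recommended blue neighbors passes with per-neighbor probability at most $1-b_0-b_1$ (since $\mu_B(d)/d\le 1-b_0-b_1$ for all $d\ge1$), a quantity that vanishes as $b_0\to1$. In the operating regime where $g_0$ is bounded away from $1$ (equivalently, the exogenous ratio is large), this yields $\bar u_G/(1-g_0)\le c_G$ for a constant $c_G$ not depending on $b_0$ or $\gamma$.

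For the blue side, every blue is joined to its $k$ cross-group recommendations at zero cost by \Cref{lem:all-edges-form}, so its utility is at least $1-b_0\prod_{j}\paren{1-\mu_j(d_j)/d_j}$ taken over its green recommended neighbors, with no cost term. Using that each such green has bounded degree $d_j\le 1/\gamma+k$ together with the monotonicity bound $\mu_G(d_j)\ge\mu_G(1)=1-g_0-g_1>0$, each green neighbor passes to the blue with probability bounded below, so the blue networking benefit $\beta_B$ is positive and $\bar u_B/(1-b_0)\ge 1+b_0\beta_B/(1-b_0)\to\infty$ as $b_0\to1$. Choosing $\munderbar b_0$ close enough to $1$ then forces the blue proportional gain above $c_G$ for every $E\in\cE_k$, which is the desired inequality.

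The step I expect to be the main obstacle is making the blue lower bound genuinely uniform, both over all equilibria and across the stated range of $\gamma$. The crude degree bound $d_j\le 1/\gamma+k$ degrades as $\gamma$ shrinks: greens may form many organic same-group links, inflate their degrees, and pass only a small probability to each recommended blue, while blues may in turn form their own organic links whose cost must be accounted for. Controlling this requires invoking the reciprocity and balance structure of \Cref{lem:reciprocity-w-recs} to pin down the degrees, and hence the per-neighbor passing probabilities, of the greens incident to blues on all but a constant number of vertices, and to bound the net contribution of any organic blue links. It is precisely here that the quantitative relationship between $\munderbar b_0$, $g_0$, and $\bar\gamma$ must be calibrated so that the blue gain from high-value cross-group links dominates the bounded green gain for every equilibrium.
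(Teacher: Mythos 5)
Your proposal follows essentially the same route as the paper's proof. The paper bounds the green side even more crudely than you do --- it uses $\abs{G}^{-1}\sum_{i\in G}u_i(E)\le 1$ so that $\eur(E)\le 1/\min_{j\in B}u_j(E)$ --- and lower-bounds the blue side exactly as you propose: by \Cref{lem:all-edges-form} every blue keeps its $k$ recommended green neighbors at zero cost, each green has degree at most $\gamma^{-1}+k$, so each passes with probability at least roughly $\gamma\,\mu_G(\gamma^{-1})$, giving $\min_{j\in B}u_j(E)\ge 1-b_0\bigl(1-\gamma\mu_G(\gamma^{-1})\bigr)^k$; the conclusion then follows by algebra once $b_0$ exceeds a threshold. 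You do not need \Cref{lem:reciprocity-w-recs} for this proposition; the crude degree bound suffices, and the paper does not invoke it here.

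The obstacle you flag at the end is, however, the genuine crux, and you should be aware that the paper does not actually resolve it in the way the statement's quantifiers demand. As $\gamma\to 0$ the per-neighbor passing probability $\gamma\mu_G(\gamma^{-1})$ vanishes (greens' equilibrium degrees grow like $1/\gamma$), so the blue networking gain vanishes while the green gain stays bounded away from zero; consequently the threshold on $b_0$ must tend to $1$. The paper handles this by setting $\munderbar{b}_0=g_0/\bigl(1-(1-g_0)(1-\gamma\mu_G(\gamma^{-1}))^k\bigr)$, which depends on both $\gamma$ and $g$ even though the proposition quantifies $\munderbar{b}_0$ before them. So your closing observation that ``the quantitative relationship between $\munderbar{b}_0$, $g_0$, and $\bar\gamma$ must be calibrated'' identifies a real issue rather than a defect specific to your write-up: to match the stated quantifier order one needs either a lower bound $\munderbar{\gamma}$ on the edge cost (as in \Cref{prop:cg-vs-ni}) or a reordering so that $\munderbar{b}_0$ may depend on $\gamma$ and $g_0$. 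Apart from leaving that calibration implicit, your argument is the paper's argument.
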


    Intuitively, the result follows from the fact that unprivileged individuals benefit more from connections to privileged individuals than privileged individuals benefit from connections to other privileged individuals.

    We note that the intuitive fact in our model that more cross-group connections leads to less inequality is not necessarily shared by other models of inequality in labor markets. For example, in \citet{okafor_seeing_2022}, networks that are totally segregated yield equal labor market outcomes, while networks that are partially integrated lead to inequality.
    
    In \Cref{fig:inequality_ratio_comparison}, we plot upper and lower bounds on inequality under different values of the exogenous utility ratio. 
    As in \Cref{fig:inequality_ratio_comparison_k0}, the red lines correspond to upper bounds and the blue lines correspond to lower bounds and different shades of a color indicate different values of $g_0$, which (due to the fact that $g_\ell = 0$ for all $\ell \neq 0, 2$) uniquely determines the exogenous opportunity distribution of members of $G$.
    As before, the discontinuities in the plot indicate points where the set of feasible equilibria change, perhaps by allowing more or fewer connections for members of $B$.
    
    We can see that for each of the values of $g_0$ in the plot, once $\eur(\varnothing)$ is more than about 3, the upper bound on utility is below the break-even line, and therefore the endogenous utility ratio is guaranteed to be less than exogenous utility ratio. 
    The point at which $\eur(E) < \eur(\varnothing)$ happens sooner for lower values of $g_0$ (displayed by lighter shades of red).
    This is because members of $G$ are likely to receive an extra opportunity and thus the benefits to members of $B$ are large.
    Indeed, for lower values of $g_0$, the endogenous utility ratio remains persistently lower than at higher values of $g_0$ as $\eur(\varnothing)$ grows large, since the members of $B$ benefit relatively more when members of $G$ are likely to receive an extra exogenous opportunity.
    Finally, we note that even a single cross-group link recommendation (we set $k=1$ in \Cref{fig:inequality_ratio_comparison}) results in the endogenous utility ratio growing sublinearly in exogenous utility. 
    This demonstrates the increasing value of cross-group link recommendations for reducing endogenous inequality at greater levels of the exogenous inequality.

    \begin{figure}
        \centering
        \includegraphics[width=\textwidth]{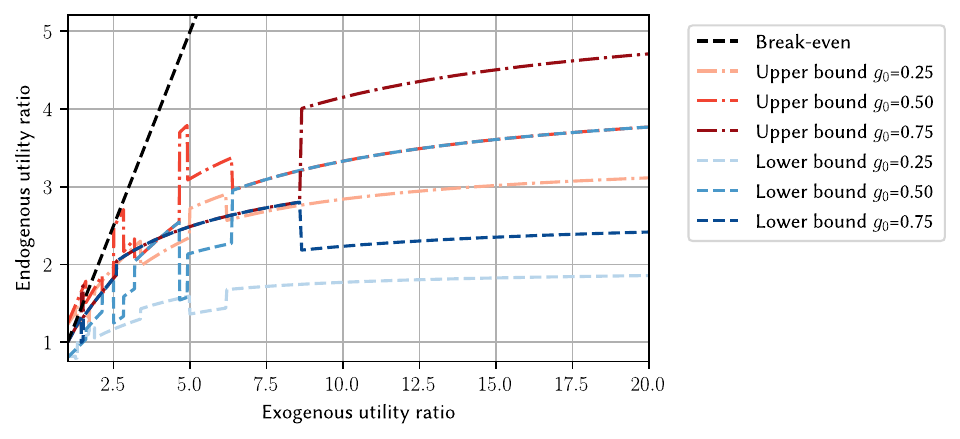}
        \caption{For large enough exogenous utility ratio, cross-group link recommendations reduce inequality, relative to exogenous levels. In the plot above, for each line, we fix $g$ and vary $b$, plotting the value of $\eur(\varnothing) = (1-g_0)/(1-b_0)$ on the horizontal axis. On the vertical axis, we plot upper bounds for the utility ratio ($\eur(E)$, $E \in \cE$) achieved for all equilibria at these parameter values. We set $\gamma = 0.04$, $k=1$ and $\rho = 1$. }
        \label{fig:inequality_ratio_comparison}
    \end{figure}
    
    A direct implication of \Cref{prop:cg_ineq} with \Cref{prop:ineqworse} is that cross-group link recommendation policies reduce inequality relative to all equilibria resulting from facially neutral platforms.
    We state this formally next.

    \begin{corollary} \label{corr:cg_ineq}
         For parameters $g, b, \gamma, n, k, \rho$ satisfying the conditions in \Cref{prop:cg_ineq}, let $\cE_k$ be the set of associated equilibria. Let $\cE_0$ be the set of equilibria with the same parameters except that the platform makes no recommendations. Then for all $E_k \in \cE_{k}$ and $E_0 \in \cE_0$, it holds 
        \begin{align*}
            \eur(E_k) < \eur(E_0).
        \end{align*}
    \end{corollary}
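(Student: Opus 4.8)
The plan is to obtain the corollary purely by transitivity, using $\eur(\varnothing)$ as a bridge between the two regimes. Concretely, for every $E_k \in \cE_k$ and $E_0 \in \cE_0$ I would establish the chain
\[
\eur(E_k) \;<\; \eur(\varnothing) \;<\; \eur(E_0),
\]
from which $\eur(E_k) < \eur(E_0)$ is immediate. The left inequality is exactly the conclusion of \Cref{prop:cg_ineq}, whose hypotheses are assumed in the corollary, so it requires no further work. The right inequality is the conclusion of \Cref{prop:ineqworse} for the $k=0$ game; the entire content of the proof is therefore to verify that parameters satisfying the hypotheses of \Cref{prop:cg_ineq} also satisfy those of \Cref{prop:ineqworse}.

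To discharge the hypotheses of \Cref{prop:ineqworse} I would check its three requirements in turn. The population requirement (at least two green and one blue individual) follows from $\abs{G}, \abs{B} = \Omega(n)$ once $n \geq \munderbar{n}$ is taken large. The separation requirement $b_0(1-b_0-b_1) < g_0(1-g_0-g_1)$ I would read off from $b_0 \geq \munderbar{b}_0$: since $b_0 + b_1 \leq 1$ we have $b_0(1-b_0-b_1) \leq 1 - b_0 \leq 1 - \munderbar{b}_0$, so choosing $\munderbar{b}_0$ close enough to $1$ forces the left-hand side below the fixed quantity $g_0(1-g_0-g_1)$, which is positive in the relevant regime (greens retain a bounded-below chance $g_0$ of receiving no exogenous opportunity, and the standing non-triviality assumption gives $g_0 + g_1 < 1$).

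The step I expect to be the main obstacle is reconciling the two constraints on the edge cost $\gamma$: \Cref{prop:cg_ineq} permits any $\gamma \in (0, \bar\gamma)$, whereas \Cref{prop:ineqworse} demands $\gamma > b_0(1-b_0-b_1)$. I would resolve this by taking the constant $\bar\gamma$ of \Cref{prop:cg_ineq} to be at most $g_0(1-g_0-g_1)$ and noting that, because $b_0(1-b_0-b_1) \leq 1 - \munderbar{b}_0$ is small, the window $(b_0(1-b_0-b_1), \bar\gamma)$ is non-empty; on this common window both propositions apply simultaneously, and \Cref{fact:eqbm-exists} guarantees $\cE_0 \neq \varnothing$, so the statement is non-vacuous. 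For still smaller $\gamma \leq b_0(1-b_0-b_1)$, blue individuals may form organic links in the $k=0$ game and \Cref{prop:ineqworse} no longer applies directly; I would either restrict the corollary to this common window (the intended reading) or argue separately that the greater leverage of green individuals keeps $\eur(E_0)$ above $\eur(\varnothing)$ even when a few blue links form. With $\gamma$ placed in the common window, both end inequalities of the displayed chain hold for all $E_k, E_0$, and transitivity finishes the proof.
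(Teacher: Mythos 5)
Your proof is correct and takes essentially the same route as the paper: the paper's entire proof is the one-line transitivity chain $\eur(E_k) < \eur(\varnothing) < \eur(E_0)$ obtained by combining \Cref{prop:cg_ineq} with \Cref{prop:ineqworse}. Your additional work reconciling the two propositions' hypotheses --- in particular the tension between $\gamma \in (0,\bar\gamma)$ in \Cref{prop:cg_ineq} and $\gamma > b_0(1-b_0-b_1)$ in \Cref{prop:ineqworse}, and the implicit need for $g_0(1-g_0-g_1) > 0$ --- is not a different approach but a legitimate tightening of a compatibility issue the paper silently glosses over.
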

    The result follows from the fact that under \Cref{prop:ineqworse}, it holds $\eur(\varnothing) < \eur(E_0)$ and under \Cref{prop:cg_ineq}, $\eur(E_k) < \eur(\varnothing)$.
    In \Cref{fig:inequality_ratio_comparison_k0,fig:inequality_ratio_comparison}, the result is illustrated by the fact that the upper bounds in \Cref{fig:inequality_ratio_comparison} are below the lower bounds in \Cref{fig:inequality_ratio_comparison_k0}.

    \section{Social welfare analysis}
    \label{sec:welfare}
    
    Next, we analyze the  social welfare at equilibrium under different link recommendation policies.
    We quantify social welfare as either the average utility of all individuals in the population (a utilitarian notion) or the minimum utility of any individual in the population (a Rawlsian notion), and our results hold for both definitions of social welfare.
    In \Cref{prop:cg-vs-ni}, we show that, for sufficiently differing privileges between groups, recommending $k$ cross-group link recommendations generates higher --- {both utilitarian and Rawlsian} --- social welfare, for all equilibria, than providing no recommendations. 
    In \Cref{prop:crossgroupgood}, we show that, keeping $k$ constant, it is always better to recommend a larger fraction of cross-group link recommendations than a smaller fraction for all equilibria, provided the two fractions are sufficiently separated.\footnote{A gap is needed between fractions of cross-group recommendations because of the multiplicity of equilibria: the best equilibrium for a particular setting of $\rho$ may be better than the worst equilibrium for $\rho + \delta$ where $\delta$ is a small constant.
}

    We first formally state \Cref{prop:cg-vs-ni}.  Intuitively, this proposition says that any equilibrium $E_k\in\cE_k$ induced by all cross-group edge recommendations achieves greater social welfare than any equilibrium $E_0\in\cE_0$ induced by a platform that makes no recommendations (where $\cE_k$ is the set of equilibria of the game with $k$ cross-group recommendations per person).

    \begin{proposition}
    \label{prop:cg-vs-ni}
        For any $k > 0$, there exist constants {$\bar{g}_0, \munderbar{b}_0\in (0,1), \munderbar{n} > 0$} such that for all $n \geq \munderbar{n}$ and opportunity distributions $g, b$ with $g_0 \leq \bar{g}_0$ and $b_0 \geq \munderbar{b}_0$, the following holds. There exist constants $\munderbar{\gamma}, \bar{\gamma}$ with $\munderbar{\gamma} < \bar{\gamma}$ such that for all edge costs $\gamma\in(\munderbar{\gamma},\bar \gamma)$ 
        and all equilibria $E_k\in\cE_k,E_0\in\cE_0$,
        \begin{align}
            \sum_{i \in [n]} u_i(E_k)  & >
             \sum_{i \in [n]} u_i(E_{0}), \text{ and}\label{eq:sw_k_utilitarian}\\
            \min_{i \in [n]} u_i(E_k) & >
             \min_{i \in [n]} u_i(E_{0}). \label{eq:sw_k_rawls}
        \end{align}
    \end{proposition}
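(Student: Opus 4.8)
The plan is to prove both inequalities by splitting the population into the two groups and bounding each group's contribution separately, exploiting that the regime ($b_0$ near $1$, $g_0$ near $0$, $\gamma$ in a suitable constant interval) forces the groups to behave very differently. The two tools I rely on are the structural facts that recommended edges always form (\Cref{lem:all-edges-form}) and that every degree is at most $1/\gamma+k$, together with one elementary \emph{telescoping} observation. When an individual $i$ contemplates dropping edges, the passing probabilities $x_\ell=\mu_\ell(d_\ell)/d_\ell$ of their \emph{retained} neighbors are unchanged, so $u_i$ as a function of $i$'s kept edge set is $1-p_{i0}\prod_{\ell}(1-x_\ell)-\gamma\cdot(\text{organic count})$ with fixed $x_\ell$. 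Applying $1-\prod_t(1-x_{j_t})=\sum_t x_{j_t}\prod_{s<t}(1-x_{j_s})$ and the single-edge condition \cref{eq:nodelete} (each retained organic edge has marginal benefit at least $\gamma$) shows that keeping \emph{any} set of organic edges is collectively worth its cost. This yields two consequences I use repeatedly: in any equilibrium $u_i(E)\ge 1-p_{i0}\prod_{j\in Q_i}(1-\mu_j(d_j)/d_j)$ (organic edges never make $i$ worse off than the free edges alone), and no \emph{set} of organic drops can strictly help $i$.

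First I would pin down the baseline $E_0$. Choosing $\munderbar{\gamma}$ above both $b_0(1-b_0-b_1)$ and $\bar g_0$ — legitimate since $b_0\to1$ drives $b_0(1-b_0-b_1)\to0$ — no organic edge incident to a blue individual can form: the marginal value to anyone of a blue neighbor is at most $\max\{b_0,g_0\}\cdot\mu_B(1)=\max\{b_0,g_0\}(1-b_0-b_1)<\gamma$. For the green side of a potential cross-group addition \cref{eq:noadd}, the telescoping observation shows that after any set of green drops the lone added blue edge still contributes below $\gamma$, so green always refuses; and blue-blue additions fail symmetrically. Hence each blue individual's $E_0$ utility equals its exogenous value $1-b_0$, while the telescoping lower bound with $p_{i0}=g_0$ (and the trivial upper bound $u_i\le1$) traps every green utility in $[1-g_0,1]$ in both $E_0$ and $E_k$.

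Next I bound $E_k$. By \Cref{lem:all-edges-form} each blue individual is joined to its $k$ recommended green neighbors, each of degree at most $1/\gamma+k$, so $\mu_G(d_j)/d_j\ge(1-g_0-g_1)\gamma/(1+k\gamma)=:c_0>0$; the telescoping bound then gives $u_i(E_k)\ge1-b_0(1-c_0)^k$ for every blue $i$, a gain of at least $c_1:=b_0[1-(1-c_0)^k]>0$ over $1-b_0$, with $c_1$ independent of $n$. Combining: for the utilitarian claim \cref{eq:sw_k_utilitarian}, the blue block gains at least $c_1\abs{B}$ while the green block, confined to $[1-g_0,1]$ in both equilibria, loses at most $g_0\abs{G}$; since $\abs{B},\abs{G}=\Theta(n)$ with bounded ratio and $c_1$ is a positive constant, taking $\bar g_0$ small enough that $g_0\abs{G}<c_1\abs{B}$ yields a strict net gain. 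For the Rawlsian claim \cref{eq:sw_k_rawls}, note $\min_i u_i(E_0)=1-b_0$ (attained by blue, since every green exceeds $1-g_0>1-b_0$), whereas in $E_k$ every individual — blue at $\ge1-b_0(1-c_0)^k$ and green at $\ge1-g_0$ — strictly exceeds $1-b_0$, so the minimum strictly increases.

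The main obstacle is the green block: when green individuals form organic green-green edges, the forced cross-group edges raise their green partners' degrees and thereby dilute the per-neighbor passing probabilities $\mu_G(d_j)/d_j$, so green welfare can genuinely fall from $E_0$ to $E_k$. The device that defeats this is precisely the $g_0\to0$ regime: because green's entire dependence on the network is scaled by its failure probability $p_{i0}=g_0$, green utility is trapped in the width-$g_0$ band $[1-g_0,1]$ irrespective of the realized network, so the total green loss is $O(g_0\abs{G})$ and is dominated by the $\Theta(\abs{B})$ blue gain. The only remaining care is the order of quantifiers: one fixes $\bar g_0,\munderbar b_0$ first, then chooses the constant interval $(\munderbar{\gamma},\bar\gamma)$ around a fixed value so that $c_0$, hence $c_1$, is bounded below uniformly over the admissible $g_0$ (using $1-g_0-g_1\ge1-\bar g_0$), which is routine.
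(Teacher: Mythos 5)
Your proof is correct, and for the utilitarian inequality \eqref{eq:sw_k_utilitarian} it takes a genuinely different and more elementary route than the paper. The paper proves this via \Cref{lem:utilbounds}, which leans on the reciprocity structure of \Cref{lem:reciprocity-w-recs} to pin down green degrees in the window $g_0\mu_G e^{-\mu_G}/\gamma \pm o(\gamma^{-1})$ and thereby obtain matching upper and lower utility bounds for all but $O(1)$ individuals; it then compares $\abs{B}(\munderbar{u}_B^{(k)}-\bar{u}_B^{(0)})+\abs{G}(\munderbar{u}_G^{(k)}-\bar{u}_G^{(0)})$. You instead use only the coarse facts (recommended edges form, $d_i\le 1/\gamma+k$, no organic blue-incident edges when $\gamma>b_0(1-b_0-b_1)$) plus one new ingredient the paper does not state: since DFPN only forbids single-edge deletions, you need and correctly supply the telescoping argument showing that dropping an arbitrary \emph{set} of organic edges never helps --- each retained-edge condition \eqref{eq:nodelete} gives marginal value $\ge\gamma$ against the full neighborhood, and the telescoped terms only get larger against smaller neighborhoods --- which yields the clean lower bound $u_i(E)\ge 1-p_{i0}\prod_{j\in Q_i}(1-\mu_j(d_j)/d_j)$. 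This lets you confine every green utility to the band $[1-g_0,1]$ in both equilibria and beat the $O(g_0\abs{G})$ worst-case green loss by the uniform $\Omega(\abs{B})$ blue gain $c_1$, with the quantifier order resolved as you describe. What your approach buys is simplicity and robustness: no reciprocity lemma, no $n-O(1)$ exceptional-set bookkeeping, and no small-$\gamma$ asymptotics (your argument does not really need the upper bound $\bar\gamma$ at all). What the paper's approach buys is quantitatively tight two-sided bounds that are reused for \Cref{prop:crossgroupgood}, where both policies give blue individuals recommended green neighbors and your crude $[1-g_0,1]$ band would not discriminate between $\rho$ and $\rho'$. Your Rawlsian argument is essentially identical to the paper's.
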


    %
    Intuitively, the result tells us that an opportunity has a lower chance of being wasted if there are many edges from high privilege to low privilege individuals.
    The first inequality concerns utilitarian social welfare, and the second concerns Rawlsian social welfare.
    We will sometimes refer to the utilitarian social welfare as the \textit{average endogenous utility} to contrast it with the \textit{average exogenous utility}, which is simply the sum of utilities of individuals in the absence of the platform: $\sum_{i = 1}^n (1 - p_{i0})$.
    Note that because we assume $\rho \abs{B}/\abs{G} \leq 1$, the result requires $\abs{B} \leq \abs{G}$.
    %
    %

    One reason why this result is surprising is that, in general, decreasing connection costs is not always a good thing for the population.
    Indeed, the focus of many of the results in \citet{dwork_equilibria_2024} is that cheaper connectivity can lead to more congestion in the routing of opportunities leading to \textit{lower overall social welfare}.
    This is because connections can induce negative externalities on the neighbors of the pair forming the connection: any opportunity sent across a particular edge is an opportunity \textit{not} sent to another neighbor.
    In general, since individuals do not pay for the negative externalities of their connections, there are more connections between individuals than is socially optimal, leading to more congestion than optimal in the transfer of opportunities.
    Thus, the result tells us that the negative externalities induced by extra connectivity in the network are outweighed by the benefits they bring by channeling opportunities from privileged to unprivileged individuals. 

    The proofs for the utilitarian versus Rawlsian social welfare are slightly different.
    To prove the statement for utilitarian social welfare, we appeal to \Cref{lem:reciprocity-w-recs} to show that almost all individuals form connections with others who have a similar probability of passing each other opportunities.
    This allows us to derive upper and lower bounds on the degrees of all but a constant number of individuals for a fixed set of parameters. 
    This, in turn, allows us to derive upper and lower bounds on the \textit{utilities} of all but a constant number of individuals (since their utility is determined by their neighbors' opportunity distributions and degrees which \Cref{lem:reciprocity-w-recs} greatly simplifies).
    We can then compare the upper bound for all $E_0 \in \cE_0$ with a lower bound for all $E_k \in \cE_k$ and show that the latter is greater than the former.
    For the Rawlsian social welfare, \Cref{lem:reciprocity-w-recs} is not useful, since the properties implied by the lemma may not hold for some number of individuals, and one of these individuals could achieve the minimum. Thus, we argue that \textit{all} members of the unprivileged group form connections, that the lowest-utility member of the unprivileged group has lower utility than that of the privileged group, and finally that the lowest-utility member of the unprivileged group under $k$ cross-group recommendations has utility greater than the corresponding individual under no recommendations.
    
    \begin{figure}
        \centering
        \includegraphics[width=\linewidth]{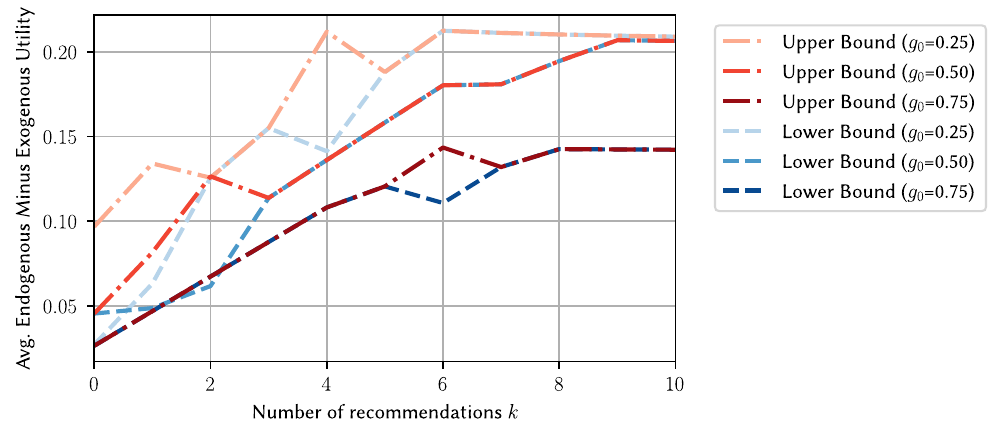}
        \caption{We plot average utility against the number of cross-group recommendations made. It is always better to make $k > 0$ cross-group recommendations than to not make any recommendations. We set $b_0$ so that the exogenous utility ratio is always $2$. We set $\gamma = 0.02$ and $\rho = 1$.}
        \label{fig:utility_bounds_k}
    \end{figure}

    In \Cref{fig:utility_bounds_k}, we plot upper and lower bounds on average utilities for different numbers of recommendations.
    As in \cref{fig:inequality_ratio_comparison,fig:inequality_ratio_comparison_k0}, upper bounds are in red and lower bounds are in blue, and different shades of the same color indicate different values of $g_0$.
    However, unlike in previous figures, on the vertical axis of \cref{fig:utility_bounds_k}, plot the difference between the average endogenous ($\sum_{i=1}^n u_i(E)$) and exogenous ($\sum_{i=1}^n (1 - p_{i0})$  utilities.
    We see that as the number of recommendations increases, the lower bound on average utility increases so that it is above the upper bound at $k=0$. (I.e., for a given $g_0$, the red line at $k=0$ is below the blue line at $k > 0$.)
    However, in general, increasing the number of recommendations --- of any fraction cross-group --- is not always social welfare increasing, since increased connectivity can lead to more inefficiency.
    We can see this by the fact that for values of $g_0$ where the upper and lower bounds are tight towards the right of the plot, the average endogenous minus exogenous utility is decreasing in $k$ past a certain point.

    We next state our second result of this section, which concerns the comparison between different fractions of cross-group recommendations.
    It states that a higher fraction always achieves greater utility, for all equilibria, than a lower fraction, as long as the gap between the lower and higher fractions is sufficiently large. 
    Given a positive integer $k$ and $\rho\in [0,1]$, we use $\cE_{\rho}$ to denote the set of all equilibria induced by any $k$ edge recommendations per person among which $\rho k$ recommendations per person are cross-group.

    
    \begin{proposition} 
    \label{prop:crossgroupgood}
            For any $k > 0$, there exist constants {$\bar{g}_0, \munderbar{b}_0\in (0,1), \munderbar{n} > 0$} such that for all $n \geq \munderbar{n}$ and opportunity distributions $g, b$ with $g_0 \leq \bar{g}_0$ and $b_0 \geq \munderbar{b}_0$, the following holds.
            There exist $\munderbar{\gamma}, \bar \gamma$ where $\munderbar{\gamma} < \bar \gamma$ such that for all edge cost $\gamma \in (\munderbar{\gamma}, \bar \gamma)$ and all $\rho\in (0,1]$, there exists $\delta \in (0, \rho]$ such that for all $\rho' \in \{0, 1/k, \dots, \floor{(\rho - \delta)k}/k \}$ and all equilibria $E_\rho \in \cE_{\rho}, E_{\rho'} \in \cE_{\rho'}$, 
        \begin{align*}
            \sum_{i \in [n]} u_i(E_\rho) & >
             \sum_{i \in [n]} u_i(E_{\rho'}), \text{ and}\\
            \min_{i \in [n]} u_i(E_\rho) & >
             \min_{i \in [n]} u_i(E_{\rho'}).
        \end{align*}
        %
    \end{proposition}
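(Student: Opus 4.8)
The plan is to mirror the two-sided bounding strategy of \Cref{prop:cg-vs-ni}: establish a welfare lower bound valid for \emph{every} $E_\rho \in \cE_{\rho}$ and an upper bound valid for \emph{every} $E_{\rho'} \in \cE_{\rho'}$, then show the former strictly exceeds the latter as soon as $\rho - \rho' \ge \delta$. First I would extract the structural consequences of the parameter regime. Choosing $\munderbar\gamma$ above the largest value any blue individual can offer a neighbor (as in the proof of \Cref{prop:ineqworse}) forces every member of $B$ to form no organic edges, so each has degree exactly $k$, of which $\rho k$ (resp.\ $\rho' k$) are cross-group and the rest in-group. Taking $g_0 \le \bar g_0$ small and $\gamma \in (\munderbar\gamma,\bar\gamma)$, I would invoke \Cref{lem:reciprocity-w-recs} to pin down, up to an $O(1)$ exceptional set, a common green degree $d_G^\star$ --- and hence a common per-neighbor pass probability $q := \mu_G(d_G^\star)/d_G^\star$ --- determined by $g,b,\gamma,k$ and the policy alone. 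Thus in any equilibrium of either game the degree and neighbor-composition of all but $O(1)$ vertices is fixed by the policy.

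Next I would write $\sum_i u_i(E) = W_B(E) + W_G(E) - 2\gamma\,\abs{E \setminus Q}$, separating blue opportunity-utility, green opportunity-utility, and organic edge cost, each of which the structural reduction fixes up to $O(1)$. The engine of the argument is that a blue member's receipt probability $1 - b_0(1-q)^{\rho k}(1-c)^{(1-\rho)k}$, with $c := \mu_B(k)/k \approx 0$ since $b_0$ is large, strictly increases in $\rho$: trading a near-worthless in-group neighbor for a valuable green neighbor ($q > c$) strictly shrinks the no-opportunity product, so (treating $q$ as essentially fixed, the one delicate point deferred below) the per-member blue gain over the gap is at least a positive constant $\kappa_0(g,b,k)$ times $(\rho - \rho')$, whence $W_B(E_\rho) - W_B(E_{\rho'}) \ge \kappa_0(\rho-\rho')\abs{B} = \Omega(\delta n)$. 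The other two terms I would merely bound in magnitude: the pinned green structure shifts by only $O((\rho-\rho')n)$ neighbor-slots between the policies and each green opportunity-utility has sensitivity at most $g_0$ to its neighborhood, so $\abs{W_G(E_\rho) - W_G(E_{\rho'})} = O(g_0(\rho-\rho')n) + O(1)$; and since a green member's marginal benefit from any edge is at most $g_0$, organic degrees are $O(\log(1/\gamma))$ and the cost terms differ by $O(\gamma\log(1/\gamma)(\rho-\rho')n) + O(1)$. Choosing $\bar g_0,\bar\gamma$ small makes these per-unit contributions fall below $\kappa_0/2$, and taking $n \ge \munderbar n$ with $\delta$ bounded below absorbs the additive $O(1)$ terms and the within-class equilibrium spread, so the blue gain dominates and the first inequality follows.

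For the Rawlsian objective I would argue directly rather than through \Cref{lem:reciprocity-w-recs}. Every blue member forms exactly its $k$ recommended edges (\Cref{lem:all-edges-form}) and no others, so its utility equals $1 - b_0(1-q)^{\rho k}(1-c)^{(1-\rho)k}$ up to the effect of $O(1)$ atypical green neighbors; with $b_0 \ge \munderbar b_0$ near $1$ and $g_0,\gamma$ small this lies strictly below the utility $\ge 1 - g_0 - \gamma\cdot O(1)$ of every green member, so the population minimum is attained in $B$ under both policies. The same strict monotonicity in $\rho$ then shows the worst-off blue member under $\rho$ (with $\rho k$ green neighbors) beats the worst-off blue member under $\rho'$ (with only $\rho' k$), provided the gap $\delta$ exceeds the $O(1)$-neighbor perturbation --- again secured by taking $\delta$ bounded below.

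The main obstacle is making the monotonicity of the blue receipt probability quantitative and uniform over \emph{all} equilibria at once. As $\rho$ grows, each green member gains blue neighbors, raising its degree and \emph{diluting} its per-neighbor pass probability $q = q(\rho)$, an effect that competes with the growth in the \emph{number} $\rho k$ of green neighbors each blue member enjoys. Proving the number effect strictly wins --- equivalently, that the total green-to-blue flow $\rho k\, q(\rho)\abs{B}$ is increasing in $\rho$ --- rests on the saturation of $\mu_G(\cdot)$ together with the regime constraints ($g_0$ small, so green opportunities are plentiful and otherwise largely wasted, and $b_0$ large, so blue members genuinely lack them), and must hold uniformly across the multiplicity of equilibria. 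This uniformity is precisely why a constant separation $\delta$, rather than a single increment $1/k$, between the two fractions is needed.
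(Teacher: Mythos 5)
Your proposal follows essentially the same route as the paper: it derives uniform per-individual utility bounds from \Cref{lem:reciprocity-w-recs} (the paper packages these as \Cref{lem:utilbounds}), shows the blue-side gain from raising $\rho$ dominates the $O(\gamma(k+1))$ per-member green-side loss once $b_0$ is large and $g_0,\gamma$ are small, and handles the Rawlsian objective by locating the minimum in $B$ and invoking monotonicity in $\rho$ --- with the gap $\delta$ arising, exactly as you say, from the mismatch between worst-case and best-case pass probabilities across the multiplicity of equilibria (and your dilution worry is resolved because the green indifference condition pins the total green degree at roughly $g_0\mu_G e^{-\mu_G}/\gamma$ independently of $\rho$). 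One small correction: equilibrium organic green degrees are $\Theta(1/\gamma)$, not $O(\log(1/\gamma))$, though the associated cost terms remain $O(1)$ per individual, so your bound on the cost difference still goes through.
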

    In other words, the proposition says that any equilibrium induced by all cross-group edge recommendations achieves greater social welfare than any equilibrium induced by a $\rho$ fraction of cross-group recommendations.
    The proof of \Cref{prop:crossgroupgood} appeals to the same upper and lower bounds on utilities used in the proof of the previous result.

    %

    \begin{figure}
        \centering
        \includegraphics[width=\linewidth]{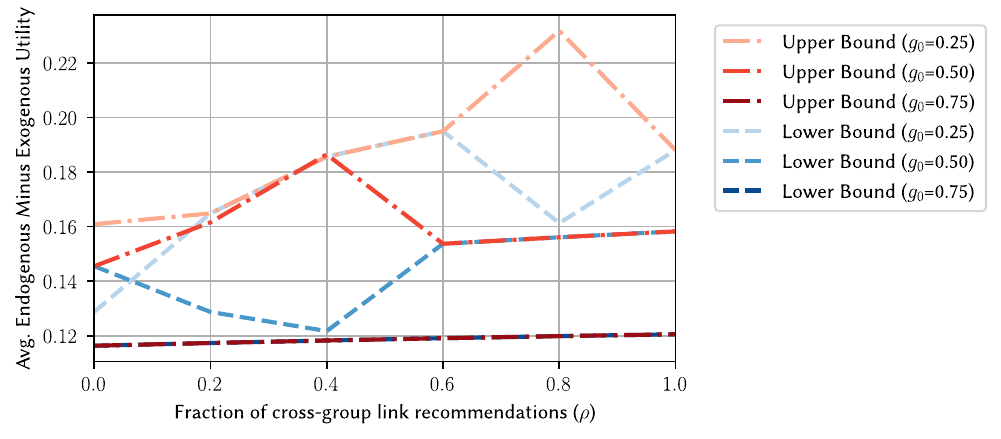}
        \caption{We plot upper (in red) and lower bounds (in blue) on average utility $n^{-1} \sum_{i=1}^n u_i(E)$ minus exogenous utility ($\sum_{i=1}^n(1 - p_{i0})$) against the fraction of cross-group link recommendations $\rho$, keeping $k=5$ constant. We set $b_0$ so that the exogenous utility ratio is always $2$. We set $\gamma = 0.02$. }
        \label{fig:utility_bounds_rho}
    \end{figure}

    In \Cref{fig:utility_bounds_rho}, we show upper and lower bounds on the difference between average endogenous utility and average exogenous utility as $\rho$ varies.
    The setting is the same as in \cref{fig:utility_bounds_k}, except that we fix $k=5$ and vary the proportion of cross-group link recommendations $\rho$.
    Note that since $k=5$, the only valid values of $\rho$ are multiples of $1/5$, the ticks on the horizontal axis.
    The average endogenous minus exogenous utility is increasing, in general, as the fraction of cross-group link recommendations increases, although there are some values of $\rho$ where the change is indeterminate (where the upper and lower bounds are too far apart).
    These parts of the parameter space, e.g., the bounds when $\rho \leq 0.4$ and $g_0 = 0.5$ demonstrate that inequality was \textit{not} sufficiently large so that for every tick on the horizontal axis greater than 0, the blue line at the tick is above the red line for some tick to its left.   
    (By contrast, if we set the exogenous utility ratio to be sufficiently large, then \cref{prop:crossgroupgood} tells us that this would occur.)
    Also, note that when $g_0$ is higher (at darker shades), the benefits of increasing $\rho$ are more modest.
    This is because there is a larger chance that members of $G$ receive no opportunities, so integrating the network can only go so far in increasing the prospects of members of $B$.

\section{Discussion and future work} \label{sec:conclusion}

    In this work, we explore the effect of platform link recommendation policies on inequality and inefficiency.
    We show that facially neutral platforms, which do not intervene on the link recommendation process, increase inequality if inequality is large enough, for all equilibria.
    We also show that cross-group link recommendations decrease inequality and increase efficiency in terms of the total fraction of opportunity used.
    Our work has a number of natural extensions that would be interesting to pursue in future work.


    \paragraph{Search costs.} In our model, we assume individuals pay a fixed cost for each connection.
    Another reasonable model would consider situations in which individuals pay a fixed cost to \textit{search} for a connection.
    In this version, an individual who pays the search cost might receive a random potential connection and then be able to decide whether to connect or not.
    This would push costs further upstream and perhaps induce different equilibria.

    \paragraph{Asymmetric connection costs.} In many situations, there may be asymmetric costs for connections, where the individual proposing the connection pays the cost (perhaps viewed as the cost of proposing, going out on a limb, or suffering the possibility of rejection).
    In this case, members of different groups might propose for a different fraction of their connections on average, perhaps leading to changes in inequality or efficiency.
    This version of the model could also lead to free-riding where a person who would benefit from a connection (even after paying the connection cost) might not propose if they know that the other person in the connection would.

    \paragraph{More than two groups.} Ultimately, real-world platforms consist of more than two levels of exogenous privilege, and it would be interesting to study $m > 2$ groups, each with identical distributions. If we took $k \to \infty$, we could induce a probability distribution over exogenous opportunity distributions.
    This would allow for more precise quantification of inequality calibrated to real-world populations in our model, but would also require new notions of inequality, since ours relies on the fact that there are just two groups.

    \paragraph{Heterogeneous opportunities.} It would also be interesting to model opportunities as (possibly binary) vectors to model complementarities between connections.
    For example, plumbers and appliance repair technicians may encounter many different types of home repair jobs but each only be able to capitalize on certain types of jobs.
    Then utility for each individual could be defined by some function of their vector of opportunities.
    In the home repair example above, utility might be defined for a plumber by whether they received any plumbing opportunities and for an appliance repair technician by whether they received any repair requests (but a plumber might not receive utility for appliance repair requests they receive and similarly for the appliance repair technician).
    Some individuals may be able to make use of some types of opportunities or only derive utility if they receive multiple, complementary opportunities.
    For example, a PhD student might only achieve success if they complete multiple, diverse different projects, deriving diverse ideas from their network.
    This is extension especially relevant under the information transmission interpretation of the model.

    \paragraph{Endogenous opportunity creation.} It is also worth considering how connections between individuals may \textit{create} opportunities. 
    Certain situations might be generative, where the intrinsic fact that two people connect creates opportunities, like a pair of researchers with good chemistry who together generate great project ideas.
    A version of the model that captures the generative aspects of networking might have further implications for the types of network structures that are desirable.

    \paragraph{Noisy or imperfect information of exogenous opportunity distributions.} Finally, we note that platforms and individuals may not have full information about individuals' opportunity distributions (or group memberships). 
    Thus, they may need to act on predictions about individuals, which could lead to effects on inequality or efficiency. \\ \\
    We hope our work inspires further exploration of inequality in professional networking, the role of platforms, and how recommendation systems may have important implications in society.

\pagebreak
    
\bibliographystyle{apalike}
\bibliography{references,manual_references}

\pagebreak

\appendix

\section{Preliminaries} \label{app:preliminaries}

    In this appendix, we include proofs for the results stated in the paper and establish several technical lemmas that are key to our results. 
    
    We first show how to construct recommendations satisfying $k$ recommendations per person and $\rho$ cross-group connections for the unprivileged group, thereby proving existence of such policies as required by many of our proofs.
    First, observe that one can separately construct the sets of cross-group and same-group recommendations.
    For the cross-group recommendations, we can apply simple extensions of Hall's marriage theorem for a complete bipartite graph to show that such a set of cross-group recommendations exists.\footnote{Hall's marriage theorem states conditions for which a bipartite graph has a perfect matching \citep{hall_representatives_1935}. We can generalize to our setting by creating a graph with $\rho k$ nodes associated with each individual in the unprivileged group and $\rho \abs{B}/\abs{G}$ nodes for each member of the privileged group and creating a complete bipartite graph (between nodes in ${B}$ and $G$) of potential recommendations on the node sets. We can verify that the conditions in the theorem hold (the neighborhood of any subset of $G$ is all of $B$ by the completeness of the graph), so that there exists a perfect matching on the graph. Then, cross-group recommendations can be made between pairs of individuals who are associated with matched nodes.} 
    Finding such a set of cross-group recommendations then amounts to reduction to a maximum flow problem using standard arguments.
    Such sets always exist for $\rho k \abs{B} \leq \abs{G}$.
    To construct the same-group recommendations, for each group, one can construct a regular graph of the appropriate degree using standard arguments (\textit{e.g.}, using circulant graphs \citep{sethuraman_graph_2004}).
    Necessary and sufficient conditions for such graphs to exist are that $\abs{G} (1-\rho) k$ and $\abs{G} (1-\rho \abs{B}/\abs{G}) k$ are even. Throughout, for simplicity, we will assume such conditions hold, but our analysis could be simply extended to the cases when they don't hold.

    Next we state the following implications of the equilibrium conditions in \cref{eq:noadd,eq:nodelete} used throughout the proofs.
    For \cref{eq:nodelete} and $i \in [n]$, \cref{eq:utilformula} and \cref{eq:nodelete} together imply 
    \begin{align}
        0 &\geq u_i(E \setminus \{ (i,j)\}) - u_i(E) && \forall (i,j) \in E \nonumber \\
        \implies 0&\geq \gamma - \min_{\ell \in N_i(E) \setminus Q_i} p_{i0}\frac{ \mu_{\ell}(d_\ell)}{d_\ell} \prod_{\substack{j \in N_i(E) \setminus Q_i \\ j \neq \ell}} \paren{1 - \frac{\mu_{j}(d_j)}{d_j}}\prod_{{j \in N_i(E) \cap Q_i}} \paren{1 - \frac{\mu_{j}(d_j)}{d_j}} \nonumber \\
        &\geq \gamma - \min_{\ell \in N_i(E) \setminus Q_i} p_{i0}\frac{ \mu_{\ell}(d_\ell)}{d_\ell}  \paren{1 - \frac{\mu_{\ell}(d_\ell)}{d_\ell}}^{d_i-k-1} \prod_{{j \in N_i(E) \cap Q_i}} \paren{1 - \frac{\mu_{j}(d_j)}{d_j}}. \label{eq:nosever}
    \end{align}
    The implication follows from the fact that $i$ must not be willing to sever their least valuable connection.
    The second inequality after the implication follows from the fact that for $\ell$ achieving the minimum and all $j$ such that $j \in N_i(E) \setminus Q_i$, it holds $\mu_{j}(d_j)/d_j \geq \mu_{\ell}(d_\ell)/d_\ell$.

    We can similarly write 
    \begin{align}
        0 &\geq \min_{\ell \in \{i, j \}} u_\ell(E \cup \{ (i,j)\}) - u_\ell(E)  \nonumber
    \end{align}
    where the first line is implied by \cref{eq:noadd}. Suppose the minimum is achieved at $\ell = i$. Then, the equilibrium condition can be written
    \begin{align}
        0 &\geq p_{i0} \frac{\mu_j(d_j+1)}{d_j + 1} \prod_{\ell \in N_i(E)} \paren{1 - \frac{\mu_\ell(d_\ell)}{d_\ell}} - \gamma && \forall (i,j) \not\in E. \label{eq:nocreate}
    \end{align}
    In other words, the marginal benefit of connecting to $j$ must not be worth the cost.
    It is written $d_j + 1$ because forming the connection would increase $j$'s degree by 1.

    \paragraph{Off-platform versus on-platform links.} Part of any reasonable definition of exogenous privilege will include the networks that privileged individuals form as a result of, essentially, their birth. 
    Individuals may have family or friend connections that are essentially guaranteed to form regardless of the existance of the platform.
    These are not the types of links that we are interested in modeling: instead we are interested in the kinds of connections that are directly facilitated by a professional networking platform, and might not otherwise exist.
    The model could be naturally extended to account for ``off-platform'' or ``exogenous'' links that are already formed when individuals arrive to the platform.
    In this case, the definition of exogenous privilege could be extended to account for these built-in connections, and the distribution $p_{i0}$ could be re-defined to mean the probability that an individual does not receive an exogenous opportunity \textit{or} an opportunity from one of their exogenous connections, and $\mu_{i}(d_i)$ throughout could be redefined to mean the expected number of opportunities an individual passes to their \textit{endogenous} connections.
    The results in this paper could then be extended to account for these built-in links by ensuring that equilibria still form. 
    If every member of the privileged group has the same number of connections to privileged and unprivileged individuals, the results in the paper follow directly by propagating through the new definitions of $p_i$ and $\mu_i(\cdot)$.

\section{Deferred proofs}

\subsection{Preliminary lemmata}

    In this subsection, we introduce several lemmas that will be useful in proving the rest of our results.

    \begin{lemma}\label{lem:no-blue-edges}
        For all $n, g, \rho, k, \gamma$, and $b$ such that $ b_0 (1-b_0 - b_1) \leq \gamma$ and $E \in \cE$, for all $j \in B$,
        \begin{align*}
            N_j(E) \setminus Q = \varnothing.
        \end{align*}
    \end{lemma}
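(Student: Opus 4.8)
The plan is to argue by contradiction: suppose some equilibrium $E \in \cE$ contains an organic edge $(i,j)$ with $j \in B$ (so $(i,j) \notin Q$), and show that the neighbor $i$ would strictly prefer to sever it, violating the no-delete condition \eqref{eq:nodelete}. The guiding intuition is that a blue individual is such a weak source of opportunities that the value it confers on \emph{any} partner never justifies paying the edge cost $\gamma$ once $\gamma \geq b_0(1-b_0-b_1)$. Crucially, I would use the \emph{neighbor's} incentive to sever (not the blue individual's), which lets me sidestep any case analysis on whether $i$ is green or blue.

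Concretely, I would first compute the marginal benefit to $i$ of retaining the edge. From the utility formula \eqref{eq:utilformula}, deleting the organic edge $(i,j)$ removes the factor $(1-\mu_j(d_j)/d_j)$ from $i$'s product and saves the cost $\gamma$, so
\[
u_i(E) - u_i(E\setminus\{(i,j)\}) = p_{i0}\prod_{\ell\in N_i(E)\setminus\{j\}}\Big(1-\frac{\mu_\ell(d_\ell)}{d_\ell}\Big)\cdot\frac{\mu_j(d_j)}{d_j} - \gamma,
\]
where all degrees are the equilibrium degrees in $E$: severing changes only the degrees of $i$ and $j$, and since $j$ leaves $i$'s neighborhood, every surviving factor (and hence the whole product) is unaffected. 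The no-delete condition \eqref{eq:nodelete} for $i$ forces this quantity to be nonnegative, i.e.\ $\gamma \le p_{i0}\prod_{\ell\neq j}(1-\mu_\ell(d_\ell)/d_\ell)\cdot\mu_j(d_j)/d_j$.

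Next I would bound the right-hand side by two observations. First, $p_{i0}\le b_0$: this is immediate if $i\in B$, and if $i\in G$ it follows from stochastic dominance of $g$ over $b$, since taking $m=1$ gives $1-g_0 \ge 1-b_0$, hence $g_0\le b_0$. Second, because $j\in B$ has degree $d_j\ge 1$, its per-neighbor transmission probability satisfies $\mu_j(d_j)/d_j \le 1-b_0-b_1$; this holds for \emph{every} $d\ge 1$ since $\mu_j(d)=\sum_{\ell\ge 2}\min(\ell-1,d)\,b_\ell \le d\sum_{\ell\ge 2}b_\ell = d(1-b_0-b_1)$, using $\min(\ell-1,d)\le d$. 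Bounding the product by $1$ then yields $\gamma \le b_0(1-b_0-b_1)$, contradicting the hypothesis $b_0(1-b_0-b_1)\le\gamma$ and completing the argument.

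I expect the only delicate point to be the boundary case in which the hypothesis holds with equality, since then the chain of inequalities is pinned to be tight and $i$ is merely indifferent to severing rather than strictly opposed. This degenerate configuration requires simultaneously $p_{i0}=b_0$, a trivial product (no other value-providing neighbor of $i$), and $\mu_j(d_j)/d_j=1-b_0-b_1$ (forcing $d_j=1$ whenever $b_2>0$), and it is excluded by the strictness already built into the intended regime---strict stochastic dominance together with positivity of $\sum_{\ell\ge 2}b_\ell$. Once that boundary is dispatched, the remainder is routine bookkeeping with \eqref{eq:utilformula} and \eqref{eq:nodelete}, and the same computation can be read off directly from the derivation of \eqref{eq:nosever} in the appendix.
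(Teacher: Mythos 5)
Your proof is correct and follows essentially the same route as the paper's one-paragraph argument: bound the marginal benefit to a neighbor of retaining an organic edge to a blue member by $p_{i0}\cdot \mu_j(d_j)/d_j \le b_0(1-b_0-b_1)$ (using stochastic dominance for $p_{i0}\le b_0$ and $\min\{\ell-1,d\}\le d$ for the transmission probability), then invoke the no-delete condition \cref{eq:nodelete} against the cost $\gamma$. Your explicit flagging of the boundary case $\gamma = b_0(1-b_0-b_1)$, where the chain of inequalities only yields indifference rather than a strict incentive to sever, is actually a point the paper's proof glosses over; since every downstream application uses the strict inequality $\gamma > b_0(1-b_0-b_1)$, nothing is lost.
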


    \begin{proof}[Proof of \Cref{lem:no-blue-edges}]
        Note that the probability that an individual does not receive an opportunity is at at most $b_0$ (since $g_0 \leq b_0$ by stochastic dominance) and the probability that a member of the blue group receives an opportunity is at most $(1-b_0-b_1)$. 
        Thus, the benefits of an organic connection can be no more than $b_0(1-b_0 - b_1)$ and the cost is $\gamma$, so \cref{eq:nodelete} above implies that no one must be willing to form an organic connection to a member of $B$. 
    \end{proof}

    \begin{lemma}\label{lem:some-green-edges}
        For $k=0$, all privileged group opportunity distributions $g$ and edge costs $\gamma < (0, g_0(1-g_0 - g_1))$, all members of $G$ except at most 1 have at least one connection.
    \end{lemma}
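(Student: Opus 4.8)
The plan is to argue by contradiction using the no-add equilibrium condition \eqref{eq:noadd}. Suppose, towards a contradiction, that at least two members of $G$, say $i$ and $j$, have no connections in some equilibrium $E \in \cE$. Since $i$ has no neighbors, in particular $(i,j) \notin E$, so \eqref{eq:noadd} applies to the pair $(i,j)$; and because $N_i(E) = N_j(E) = \varnothing$, the only available drop sets are $S_i = S_j = \varnothing$. Thus the equilibrium condition reduces to the requirement that at least one of $i, j$ does \emph{not} strictly gain from forming $(i,j)$. I would derive a contradiction by showing that \emph{both} strictly gain.

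The key computation is exactly the one underlying \Cref{prop:ineqworse}. Both $i$ and $j$ currently have degree $0$, so $u_i(E) = 1 - g_0$ (and likewise for $j$). After adding $(i,j)$, each has degree $1$, and a degree-$1$ green member passes its single neighbor an opportunity with probability $\mu_G(1) = 1 - g_0 - g_1$ (it has a spare opportunity to pass exactly when it draws at least two). Plugging into \eqref{eq:utilformula}, individual $i$ fails to receive an opportunity only if it draws none itself (probability $g_0$) and $j$ passes none (probability $g_0 + g_1$), so
\begin{align*}
    u_i(E \cup \{(i,j)\}) = 1 - g_0(g_0 + g_1) - \gamma.
\end{align*}
The net gain is $u_i(E \cup \{(i,j)\}) - u_i(E) = g_0(1 - g_0 - g_1) - \gamma$, which is strictly positive precisely because $\gamma < g_0(1 - g_0 - g_1)$. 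By symmetry the same holds for $j$, so both strictly benefit, contradicting \eqref{eq:noadd}.

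Since this contradiction arises from \emph{any} two green members lacking a connection, at most one member of $G$ can have no connection, which is the claim. I do not expect a serious obstacle here: the only points requiring care are (i) verifying that the drop sets in \eqref{eq:noadd} are forced to be empty when both endpoints have degree $0$, so that the condition really does reduce to a pure edge-addition comparison, and (ii) confirming that the model assumption $\sum_{\ell \geq 1}(\ell - 1)g_\ell > 0$ guarantees $1 - g_0 - g_1 > 0$, so that the interval $(0, g_0(1 - g_0 - g_1))$ for $\gamma$ is nonempty and the strict gain is genuine.
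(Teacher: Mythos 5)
Your proof is correct and follows essentially the same route as the paper's: both hinge on the comparison $1 - g_0 < 1 - g_0(g_0+g_1) - \gamma$, i.e., that two isolated green members would each strictly gain by forming the edge between them, violating the no-add condition \eqref{eq:noadd}. You simply spell out the contradiction and the reduction of \eqref{eq:noadd} to a pure edge-addition comparison (empty drop sets) that the paper leaves implicit.
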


    \begin{proof}[Proof of \Cref{lem:some-green-edges}]
        Note that the fact that $\gamma < g_0(1-g_0 - g_1)$ implies 
        \begin{align*}
            1 - g_0 < 1 - g_0 ( g_0 + g_1 ) - \gamma.
        \end{align*}
        Observe that the left-hand side is the utility of a green-group member with no connections and the right-hand side is the utility of a green-group member with a single connection to another green-group member with only one connection.
    \end{proof}
    
\subsection{Deferred proofs for Section~\ref{sec:equilibrium}}

    \begin{proof}[Proof of \Cref{lem:reciprocity-w-recs}] For the first statement in part (a) of the result, the proof of Proposition 4.9 of \citet{dwork_equilibria_2024} can be extended simply to the setting of \Cref{lem:reciprocity-w-recs}, which accounts for the incentives resulting from recommendations. 
    %
    In fact, the expression for the marginal utilities of adding or removing the edge (\Cref{eq:noadd,eq:nodelete}) are the same.
    Thus, we can follow the same argument as in the proof of Proposition 4.9, reproduced here for the sake of being self-contained. If some pair have the following properties
    \begin{enumerate}
        \item $d_i = d_{i'}$
        \item $i,j$ are members of the same group 
        \item there exists $j, j'$ such that $(i,j), (i',j') \in E$, \textit{and} $(i,j), (i',j') \not\in Q$ where
        \begin{align*}
            \frac{\mu_j(d_j)}{d_j} < \frac{1}{d_i + 1} \left\lfloor \frac{\mu_i(d_i + 1)}{\varepsilon}\right\rfloor \varepsilon \quad\quad \text{and} \quad\quad\frac{\mu_{j'}(d_{j'})}{d_{j'}} < \frac{1}{d_{i'} + 1} \left\lfloor \frac{\mu_{i'}(d_{i'} + 1)}{\varepsilon}\right\rfloor\varepsilon
        \end{align*}
    \end{enumerate}
    then $(i,i')$ must be connected. 
    To see why this is true, suppose for contradiction that they are not connected.
    Then $i,i'$ would rather defect and connect to each other while severing their connections to $j,j'$ since we assumed that the marginal benefits of the $(i,j)$ and $(i',j')$ edge are less than the benefits of severing those connections and creating a $(i,i')$ link.
    But this contradicts the fact that $E$ is an equilibrium.

    The above argument implies that, in an equilibrium, at most $d_i+1$ individuals of the same group as $i$ can exist where those conditions hold, since each pair of such individuals must be connected and thus form a $(d_i+1)$-clique.
    We will upper bound how many of these individuals there can be in the population.

    %
    We first need to upper bound how many $\varepsilon$-grid cell-degree combinations there can be.
    As in the proof of Proposition 4.9, there can only be $\max_{i \in [n]} d_i/\varepsilon$ cells. 
    And since utility must be nonnegative and all recommended connections will form by \Cref{lem:all-edges-form}, we have
    \begin{align*}
        0 \leq u_i(E) = 1 - p_{i0} \prod_{j \in N_i(E)}\paren{1 - \frac{\mu_j(d_j)}{d_j}} - \gamma (d_i - k)
    \end{align*}
    which implies $d_i \leq \gamma^{-1} + k$.
    Thus, there can be no more than $2(\gamma^{-1} + k)\varepsilon^{-1}$ cells and no more than $2(\gamma^{-1} + k)^2\varepsilon^{-1}$ cell-degree combinations.
    Now, since at most $d+1$ individuals of degree $d$ in each cell can have a connection violating \cref{eq:genreciprocity}, this implies at most $2(\gamma^{-1} + k)^2(\gamma^{-1} + k + 1)\varepsilon^{-1}$ individuals across all cells can have a connection violating \cref{eq:genreciprocity}.
    Define $\cV$ to consist of all individuals $i \in [n]$ such that for all $j \in N_i(E)$ it holds
    \begin{align*}
        \frac{\mu_j(d_j)}{d_j} \geq \frac{1}{d_i + 1} \left\lfloor \frac{\mu_i(d_i + 1)}{\varepsilon}\right\rfloor,
    \end{align*}
    and for whom there exists some $j$ such that $(i,j) \not\in E$, $d_i = d_j$ and $i,j$ in the same group.
    Notice all individuals in $\cV$ satisfy the left-hand inequality in \cref{eq:genreciprocity}. 
    Define $S \defeq \{ i \in \cV \; : \; \forall j \in N_i(E), j \in \cV\}$ to consist of all individuals in $\cV$ whose connections are all in $\cV$. 
    Notice for all $i \in S$ and $j \in N_i(E)$ it must hold that
    \begin{align*}
        \frac{\mu_i(d_i)}{d_i} \geq \frac{1}{d_j + 1} \left\lfloor \frac{\mu_j(d_j + 1)}{\varepsilon}\right\rfloor \varepsilon \geq \frac{\mu_j(d_j + 1)}{d_j+1} - \varepsilon,
    \end{align*}
    which is the right-hand inequality in \cref{eq:genreciprocity}. Since the number of individuals in $[n] \setminus \cV$ is upper bounded by $2(\gamma^{-1} + k)^2(\gamma^{-1} + k + 1)\varepsilon^{-1}$ and each individual's degree is upper bounded by $\gamma^{-1} + k$, the number of individuals connected to individuals in $[n] \setminus \cV$ must be bounded by $2(\gamma^{-1} + k)^2(\gamma^{-1} + k + 1)^2\varepsilon^{-1}$.
    Finally, since all $i \in S$ satisfy \cref{eq:genreciprocity}, we have the first statement in the first part of the result.

    For the second statement in part (a) of the result, assume without loss of generality that $i, j \in G$. (So their opportunity distributions are given by $g$.) 
    Suppose $i \in S$ and $j \in N_i(E)$.
    Note that, by the first statement in part (a), it holds $j \in S$.
    Thus, 
    \begin{align*}
        & \frac{\mu_i(d_i + 1)}{d_i + 1} \leq \frac{\mu_j(d_j)}{d_j} + \varepsilon, \\
        \implies & (d_j - 1) {\mu_i(d_i + 1)} + {\mu_i(d_i + 1)} \leq d_i{\mu_j(d_j)} + {\mu_j(d_j)} + \varepsilon,  \\
        \implies & (d_j - 1) {\mu_i(d_i)} - (d_j - 1)({\mu_i(d_i)} - {\mu_i(d_i + 1)}) + {\mu_i(d_i + 1)} \\
        &\leq d_i{\mu_j(d_j - 1)} - d_i ({\mu_j(d_j - 1)} - {\mu_j(d_j)}) + {\mu_j(d_j)} + \varepsilon, \\
        \implies & \frac{\mu_i(d_i)}{d_i} \leq \frac{\mu_j(d_j - 1)}{d_j - 1} + \frac{{\mu_i(d_i)} - {\mu_i(d_i + 1)}}{d_i}- \frac{{\mu_j(d_j - 1)} - {\mu_j(d_j)}}{d_j - 1} + \frac{\mu_j(d_j) - \mu_i(d_i + 1)}{d_i(d_j - 1)} + \varepsilon,
    \end{align*}
    where each implication follows by rearranging the previous inequality.
    Observe, to prove the statement, we just need to show
    \begin{equation}
    \label{eq:goal-3-1}
        \frac{{\mu_i(d_i)} - {\mu_i(d_i + 1)}}{d_i}- \frac{{\mu_j(d_j - 1)} - {\mu_j(d_j)}}{d_j - 1} + \frac{\mu_j(d_j) - \mu_i(d_i + 1)}{d_i(d_j - 1)} \leq 0.
    \end{equation}
    By our assumption that $i$ and $j$ belong to the same group, we have $\mu_i = \mu_j$. 
    Next, we break the analysis into two cases.

    Case 1: 
    $d_j \le d_i + 1$. 
    Notice,
    \begin{equation}
    \label{eq:part-1-2}
    \frac{ {\mu_i(d_i) - \mu_i(d_i + 1)}}{d_i}  - \frac{ {\mu_j(d_j - 1) - \mu_j(d_j)} }{d_j - 1}  
        = \frac{ \sum_{\ell \geq d_i+1} g_\ell }{d_i}  - \frac{ \sum_{\ell \geq d_j} g_\ell }{d_j - 1} \le 0.
    \end{equation}
    Moreover, since the function $\mu_i = \mu_j$ is non-decreasing and $d_j \le d_i + 1$, we have
    \begin{equation}
    \label{eq:part-3}
    \frac{\mu_j(d_j) - \mu_i(d_i + 1)}{d_i(d_j - 1)} \le 0.
    \end{equation}
    Summing up \eqref{eq:part-1-2} and \eqref{eq:part-3}, we get \eqref{eq:goal-3-1}, as desired.

    Case 2: $d_j > d_i + 1$.
    Notice, by \cref{eq:genreciprocity} and the fact that $j \in S$, we have
    \begin{align*}
        &\frac{\mu_j(d_j)}{d_j} \geq \frac{\mu_i(d_i + 1)}{d_i+1} - \varepsilon \\
        \implies & \sum_{\ell \geq 1} \paren{\frac{\min\{ (d_j, \ell - 1) \}}{d_j} - \frac{\min\{ (d_i+1, \ell - 1) \}}{d_i+1}} g_\ell \geq - \varepsilon \\
        \implies & \sum_{\ell = 1}^{d_i+1} \paren{\frac{  (\ell - 1) }{d_j} - \frac{( \ell - 1) }{d_i+1}} g_\ell  
        + \sum_{\ell = d_i+2}^{d_j} \paren{\frac{ (\ell - 1)}{d_j} - 1} g_\ell\geq - \varepsilon \\
        \implies & \sum_{\ell = 1}^{d_i+1} (\ell - 1)\paren{{ (d_i+1)  } - {{d_j} }} g_\ell  
        + d_i \sum_{\ell = d_i+2}^{d_j} \paren{{ (\ell - 1)} - {d_j}} g_\ell\geq - \varepsilon d_j (d_i+1) \\
        \implies & -\sum_{\ell = 1}^{d_i+1} (\ell - 1) g_\ell  
        - d_i \sum_{\ell = d_i+2}^{d_j}  g_\ell\geq - \varepsilon d_j (d_i+1) \tag{$d_j > d_i + 1$}\\ 
        \implies & \frac{\mu_G(d_i)}{d_j (d_i+1)} \leq \varepsilon  
    \end{align*}
    Where each line except for the second-to-last follows by rearranging.
    But the last line contradicts the assumption that $\varepsilon$ is small.
    Thus, we have a contradiction with \cref{eq:genreciprocity}, and it must not hold that $d_j > d_i + 1$.

    For part (b), note that in our construction of $\cV$ (and therefore $\cS$) above, we required that there exists $(i,j) \not\in E$, such that $d_i = d_j$ and $i,j$ in the same group. Thus, the second part holds by the construction of $\cS$.

    For final statement in the result (about small enough $\varepsilon$), notice that, since each person's degree is bounded by $\gamma^{-1} + k$ 
    %
    the range of $\mu_G(\cdot)$  and $\mu_B(\cdot)$ are finite sets, which means the map $d \mapsto \mu_i(d)/d$ takes values in a finite set. Thus, for small enough $\varepsilon$, the inequalities in \cref{eq:genreciprocity} hold exactly. 
    \end{proof}
    
    \begin{proof}[Proof of \Cref{fact:eqbm-exists}]
        The proof follows the pattern of that of \citet{dwork_equilibria_2024}, Proposition C.1.
        First, note from \Cref{lem:all-edges-form} all recommended connections will form, so each member of $B$ will have at least $k$ connections. Additionally, by \Cref{lem:no-blue-edges}, since we assumed $\gamma > b_0(1-b_0 - b_1)$, no member of $B$ will have more than $k$ connections.
        We will prove the existence of an equilibrium where each individual in $G$ has the same degree as every other member of their group (denote their degree $d_G$).
        %
        First, note that if $$\gamma > g_{0}\frac{ \mu_{G}(k+1)}{k+1} \paren{1 - \frac{\mu_G(k)}{k}}^{k\paren{1-\rho \abs{B}/\abs{G}}} \paren{1 - \frac{\mu_B(k)}{k}}^{k\rho \abs{B}/\abs{G}},$$ then no member of $G$ will form an organic (\textit{i.e.}, not recommended) connection with a member of $G$ since the cost of the connection outweighs the marginal benefit.
        %
        Thus, any $k$-regular graph is an equilibrium.

        Next, suppose $\gamma$ is less than the right-hand side of the inequality above.
        Note that, by the equilibrium conditions \cref{eq:noadd} and \cref{eq:nodelete}, if such an equilibrium exists and $d_G > k$, then
        \begin{align}
            \gamma \leq  g_{0}\frac{ \mu_{G}(d_G)}{d_G}  \paren{1 - \frac{\mu_{G}(d_G)}{d_G}}^{d_G-k\rho \abs{B}/\abs{G}-1} \paren{1 - \frac{\mu_B(k)}{k}}^{k\rho \abs{B}/\abs{G}} \label{eq:gammalower}
        \end{align}
        and 
        \begin{align}
           \gamma \geq  g_{0} \frac{\mu_G(d_G+1)}{d_G  + 1} \paren{1 - \frac{\mu_G(d_G)}{d_G }}^{d_G - k\rho\abs{B}/\abs{G}} \paren{1 - \frac{\mu_B(k)}{k }}^{k\rho\abs{B}/\abs{G}} \label{eq:gammaupper}
        \end{align}
        It is easy to verify that for fixed $d_G$, the right-hand side of \cref{eq:gammalower} is greater than the right-hand side of \cref{eq:gammaupper}.
        Now notice that if we argue that 
        \begin{align}
            \begin{aligned}
            \bigcup_{d_G \geq k+1} &\Bigg[g_{0} \frac{\mu_G(d_G+1)}{d_G  + 1} \paren{1 - \frac{\mu_G(d_G)}{d_G }}^{d_G - k\rho\abs{B}/\abs{G}} \paren{1 - \frac{\mu_B(k)}{k }}^{k\rho\abs{B}/\abs{G}}, \\
            &g_{0}\frac{ \mu_{G}(d_G)}{d_G}  \paren{1 - \frac{\mu_{G}(d_G)}{d_G}}^{d_G-k\rho \abs{B}/\abs{G}-1} \paren{1 - \frac{\mu_B(k)}{k}}^{k\rho \abs{B}/\abs{G}} \Bigg] \\
            = &\Bigg(0, g_{0}\frac{ \mu_{G}(k+1)}{k+1} \paren{1 - \frac{\mu_G(k)}{k}}^{k\paren{1-\rho \abs{B}/\abs{G}}} \paren{1 - \frac{\mu_B(k)}{k}}^{k\rho \abs{B}/\abs{G}}\Bigg] 
            \end{aligned}\label{eq:unionofintervals}
        \end{align}
        then the result is proved, since this implies that for all $\gamma >  b_0(1 - b_0 - b_1)$, there exists a $d_G$ such that \cref{eq:gammalower} and \cref{eq:gammaupper} are satisfied.
        %
        First, observe that 
        \begin{align*}
            \frac{ \mu_{G}(d_G + 1)}{d_G + 1}  \paren{1 - \frac{\mu_{G}(d_G + 1)}{d_G + 1}}^{d_G-k\rho \abs{B}/\abs{G}} > \frac{\mu_G(d_G+1)}{d_G  + 1} \paren{1 - \frac{\mu_G(d_G)}{d_G }}^{d_G - k\rho\abs{B}/\abs{G}}.
        \end{align*}
        Thus, the $(d_G+1)$th interval overlaps with the $d_G$th.
        Second, note that the right endpoint of the $d_G = k+1$ interval is the right endpoint of the right-hand interval in \cref{eq:unionofintervals}.
        Finally, note that the limit as $d_G \to \infty$ of the left endpoint of the intervals is $0$.
        Thus, \cref{eq:unionofintervals} holds.

        Finally, as long as $n$ is large enough that it is possible to construct a graph where all members of $B$ have degree $k$ and all members of $G$ have degree $d_G$, the result holds.
        \end{proof}

    \subsection{Deferred proofs for Section~\ref{sec:inequality}}

    \begin{proof}[Proof of \Cref{prop:ineqworse}]
    From \Cref{lem:some-green-edges}, note that all but at most one green group member will have at least one connection.
    Thus, for all but at most one green-group member $i \in G$
    \begin{align}
        u_i(E) \geq 1 - g_0(g_0 + g_1) - \gamma > 1 - g_0. \label{eq:geqg0}
    \end{align}
    This implies
    \begin{align*}
        {\abs{G}}^{-1}\sum_{i \in G} {u_i(E)} \geq \paren{1 - \frac{1}{\abs{G}}} \paren{1 - g_0(g_0 + g_1) - \gamma} 
    \end{align*}
    and so for $\abs{G} > (1 - g_0(g_0 + g_1) - \gamma ) / (g_0 (1 - (g_0 + g_1)) - \gamma)$, 
    \begin{align*}
        {\abs{G}}^{-1}\sum_{i \in G} {u_i(E)} > 1 - g_0.
    \end{align*}

    Next, we will argue that members of the blue group will not form any connections, so that each of their utilities are equal to $1-b_0$. 
    To do so, we will argue that no one (from either group) will want to form a connection with a blue group member (for $b_0$ sufficiently large).
    Suppose $i \in [n]$ has at least one connection to a blue group member in an equilibrium edge set $E$, and
    let $\ell$ denote such a blue group member.
   \remove{ \cd{Suspect first line should be:
      \begin{align*}
         u_i(E) - u_i(E \setminus \{ (i,\ell) \}) &= \gamma - p_{i0} \frac{\mu_\ell(d_\ell)}{d_\ell}\prod_{j \; : \; (i,j) \in E \setminus \{ (i,\ell)\}}\paren{1 - \frac{\mu_j(d_j)}{d_j}}  
     \end{align*}
    Was $\mu_\ell(d_\ell)$ defined?  In our case, is it $\mu_B(d_\ell)$ since $\ell \in B$?  Similarly for $\mu_j(d_j)$?
 }
    \ch{Fixed typos and defined $\mu_\ell$ above; it shouldn't be $\mu_B(d_j)$ since we aren't assuming group membership of $j$.}
    }
    Then
    \begin{align*}
        u_i(E) - u_i(E \setminus \{ (i,\ell) \}) &= \gamma - p_{i0} \frac{\mu_B(d_\ell)}{d_\ell}\prod_{j \; : \; (i,j) \in E \setminus \{ (i,\ell)\}}\paren{1 - \frac{\mu_j(d_j)}{d_j}}  \\
        &\geq \gamma - \frac{\mu_B(d_\ell)}{d_\ell} \\
        &\geq \gamma - {\sum_{j \geq 1} (j - 1) b_j} \\
    \end{align*}
    where the first line is by plugging in the expressions for $u_i$, the second is by the fact that $p_{i0}$ and the product term $\prod_{j \; : \; (i,j) \in E \setminus \{ (i,\ell)\}}\paren{1 - \frac{\mu_j(d_j)}{d_j}}$ are both bounded above by 1 and the third line is because 
    \begin{align*}
        \mu_\ell(d) = \sum_{j \geq 1} \min \curly{j - 1 , d} b_j\leq \sum_{j \geq 1}(j - 1) b_j.
    \end{align*}
    Thus, if $\sum_{j \geq 1}(j - 1) b_j \leq \gamma$, then no individual will want to connect to a blue group member.

    Finally, these two facts imply
    \begin{align*}
        \eur(E) &= \frac{{\abs{G}}^{-1}\sum_{i \in G} {u_i(E)}}{{\abs{B}}^{-1} \sum_{i \in B} {u_i(E)}} \\
        &= \frac{{\abs{G}}^{-1}\sum_{i \in G} {u_i(E)}}{1-b_0} \\
        &> \frac{1-g_0}{1-b_0}
    \end{align*}
    where the last line is implied by \cref{eq:geqg0} for all $\abs{G}$ sufficiently large.
    \end{proof}

\begin{proof}[Proof of \Cref{prop:cg_ineq}]
        Let $\munderbar{b}_0 = {g_0}/{(1 - (1-g_0)(1 - \gamma \mu_G(\gamma^{-1}))^k)}$.
        Fix any equilibrium edge set $E$ for any $n \in \N$, $b$, $\gamma$ and $b_0 > b_0'$.
        Define $\dgupper$ to be an upper bound on $d_j$ for all $j \in G$ with respect to $E$.
        Observe that
        \begin{align*}
            \eur(E) \leq \frac{1}{\min_{j \in B} u_j(E)}
        \end{align*}
        since $u_i(E) \leq 1$ for all $i \in [n]$ so $\abs{G}^{-1} \sum_{i \in G} u_i(E) \leq 1$.
        Also, since all recommended edges form by \Cref{lem:all-edges-form}, 
        it holds
       \begin{align*}
            \min_{j \in B} u_j(E) &\geq 1-b_0\paren{1-\frac{\mu_G(\dgupper)}{\dgupper}}^k \\
            &\geq 1 - b_0 \paren{1 - {\gamma \mu_G(\gamma^{-1})}}^k
       \end{align*} 
       where the first line comes from the fact that $\mu_j(d) / d$ is non-increasing in $d$ the second comes from the fact that $\dgupper \leq \gamma^{-1} + k$ (since utilities are bounded above by 1 and each edge costs $\gamma$, having more than $\gamma^{-1}$ non-recommended connections will always yield negative utility).
        Thus, by assumption 
        \newcommand{\gammuk}{\paren{1 - {\gamma \mu_G(\gamma^{-1})}}^k}
        \begin{align*}
            &b_0 > \frac{g_0}{1 - (1 -g_0) \gammuk} \\
            \implies & b_0 (1 - (1-g_0)\gammuk) > g_0 \\
            \implies & 1 - g_0 - b_0 (1 - g_0) \gammuk > 1 - b_0 \\
            \implies &1 - b_0 < (1 - g_0)(1 - b_0 \gammuk) \\
            \implies & \frac{1}{1 - b_0 \gammuk} < \frac{1 - g_0}{1 - b_0} = \eur(\varnothing)
        \end{align*}
        by rearranging. Summarizing the analysis above, we have
        \[
        \eur(E) \le \frac 1{\min_{j \in B} u_j(E)} \le \frac 1{1 - b_0 \paren{1 - {\gamma \mu_G(\gamma^{-1})}}^k} < \eur(\varnothing).
        \]
        It remains to prove $\munderbar{b}_0 < 1$. Indeed,
        \begin{align*}
            \munderbar{b}_0 = \frac{g_0}{1 - (1-g_0)(1 - \gamma \mu_G(\gamma^{-1}))^k} < 1
        \end{align*}
        by the fact that $\gamma \mu_G(\gamma^{-1})) \in (0, 1)$ and rearranging.
    \end{proof}
\subsection{Deferred proofs for Section~\ref{sec:welfare}}    
    \begin{lemma}\label{lem:utilbounds}
        For any constant $\varepsilon > 0$ and any $k > 0$, there exist constants $\munderbar{b}_0 \in (0, 1), \munderbar{n}$ such that for all $n > \munderbar{n}$ and distributions $g, b$ with $b_0 \geq \munderbar{b}_0$, the following holds.
        There exists a constant $\munderbar{\gamma}, \bar \gamma$ with $\munderbar{\gamma} < \bar \gamma$ such that for all $\gamma \in (0, \bar \gamma)$, all $\rho \in [0, 1]$ and any equilibrium edge set $E \in \cE$, there exists a set $S \subseteq [n]$ such that $\abs{S} \geq n - O(1)$ and
        \begin{align}
            {u_i}(E) &\geq 1 - b_0 \paren{1 - \frac{\gamma }{g_0 e^{-\mu_G} + \gamma \mu_G^{-1}}}^{\rho k} - \varepsilon && \forall i \in S \cap B 
            \label{eq:uBlower}\\
            {u}_i(E) &\leq 
               1 - b_0 \paren{1 - \frac{\gamma}{ g_0 e^{-\mu_G}  - \gamma \mu_G^{-1}}}^{\rho k} + \varepsilon && \forall i \in S \cap B
            \label{eq:uBupper}\\
            {u}_i(E) &\geq 1 - g_0 (1 + \mu_G) e^{-\mu_G} - \varepsilon && \forall i \in S \cap G\\ 
            {u_i}(E) &\leq 1 -  g_0(1 + \mu_G) e^{-\mu_G} + \gamma (k + 1) + \varepsilon && \forall i \in S \cap G  
        \end{align}
    \end{lemma}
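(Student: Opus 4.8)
The plan is to use \Cref{lem:reciprocity-w-recs} to collapse every individual's utility into a function of a single quantity — the green value density $\mu_G/\dgupper$, where $\dgupper$ denotes the (nearly common) degree of green members — and then to pin down $\dgupper$ through the equilibrium conditions. I would first invoke \Cref{lem:reciprocity-w-recs} with a fixed constant $\varepsilon_0$ small enough that its inequalities hold \emph{exactly}, producing a set $S$ with $\abs{S} \geq n - O(1)$. Because each distribution has support bounded by $C$, we have $\mu_G(d) = \mu_G$ once $d \geq C$, so for connected same-group members $i,j \in S$ of large degree the exact balance inequalities reduce to $\mu_G/(d_j+1) \leq \mu_G/d_i \leq \mu_G/(d_j-1)$, i.e.\ $\abs{d_i - d_j} \leq 1$; part~(b) additionally hands each $i \in S$ a non-neighbour of equal degree in the same group, which is exactly the witness the no-create condition will need.

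Next I would reduce the two groups' miss probabilities to powers of $1 - \mu_G/\dgupper$. By \Cref{lem:no-blue-edges} (available since $b_0 \geq \munderbar{b}_0$ forces $b_0(1-b_0-b_1) \leq \gamma$) every blue member forms only its $k$ recommended edges, so it pays no cost and has degree $k$; moreover $\mu_B \leq (C-1)(1-b_0) \to 0$, so each blue--blue factor $1 - \mu_B(k)/k$ is $1 + o(1)$. Hence a blue member's probability of receiving no endogenous opportunity is $(1-\mu_G/\dgupper)^{\rho k}(1+o(1))$, all of the content coming from its $\rho k$ green neighbours, and the identical bookkeeping shows a green member's miss probability is $(1-\mu_G/\dgupper)^{\dgupper - \rho k\abs{B}/\abs{G}}(1+o(1))$.

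The heart of the argument is to evaluate the green product and locate $\dgupper$. Since $\abs{d_i - d_j}\leq 1$ among same-group neighbours, each green factor equals $(1-\mu_G/\dgupper)\bigl(1+O(\dgupper^{-2})\bigr)$; the \emph{absolute} degree gap makes the relative per-factor error summable, so over the $\dgupper = \Theta(\gamma^{-1})$ factors the product is $(1-\mu_G/\dgupper)^{\dgupper}(1+O(\gamma)) = e^{-\mu_G}(1+o(1))$. That $\dgupper = \Theta(\gamma^{-1})$ follows from $\dgupper \leq \gamma^{-1} + k$ together with the observation that for $\gamma$ small a green member strictly gains from each additional organic edge until its degree passes $c\gamma^{-1}$, since the marginal gain $g_0 (\mu_G/d) e^{-\mu_G}(1+o(1))$ stays above $\gamma$ up to that point. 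Feeding $e^{-\mu_G}(1+o(1))$ into the no-sever inequality \cref{eq:nosever} and the no-create inequality \cref{eq:nocreate} (applied to the equal-degree non-neighbour from part~(b), symmetrised so $i$ itself does not gain) gives $\gamma \leq g_0 (\mu_G/\dgupper) e^{-\mu_G}(1+o(1))$ and $\gamma \geq g_0 (\mu_G/(\dgupper+1)) e^{-\mu_G}(1+o(1))$, which trap $\mu_G/\dgupper$ in $\bigl[\tfrac{\gamma}{g_0 e^{-\mu_G}+\gamma\mu_G^{-1}},\, \tfrac{\gamma}{g_0 e^{-\mu_G}-\gamma\mu_G^{-1}}\bigr]$ up to $o(1)$; the $\pm \gamma\mu_G^{-1}$ is precisely the algebraic image of the off-by-one $\dgupper \mapsto \dgupper \pm 1$. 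Substituting this window into $u_i = 1 - b_0(1-\mu_G/\dgupper)^{\rho k}(1+o(1))$ yields \cref{eq:uBlower,eq:uBupper}, and into $u_i = 1 - g_0 e^{-\mu_G}(1+o(1)) - \gamma(\dgupper - k)$ yields the two green bounds, the cost $\gamma\dgupper = g_0 \mu_G e^{-\mu_G} \pm \gamma$ producing the term that combines with $g_0 e^{-\mu_G}$ into $g_0(1+\mu_G)e^{-\mu_G}$ and leaving the $+\gamma(k+1)$ in the green upper bound. Choosing $b_0$ near $1$, $\gamma$ small and $n$ large drives every $o(1)$ below the target $\varepsilon$.

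I expect the main obstacle to be this middle step: controlling how the $\pm 1$ degree discrepancies and the blue-neighbour corrections propagate through a product of $\Theta(\gamma^{-1})$ factors without the accumulated error overwhelming the $e^{-\mu_G}$ limit or the delicate $\pm\gamma\mu_G^{-1}$ window. This hinges on using the \emph{exact}, constant-$\varepsilon$ form of \Cref{lem:reciprocity-w-recs} to bound neighbour degrees within an additive $1$: a fixed $\varepsilon$ would give only a vacuous bound once degrees grow like $\gamma^{-1}$, since then $\mu_G/\dgupper \ll \varepsilon$, so the exact version is what keeps the per-factor error at $O(\dgupper^{-2})$ and hence summable.
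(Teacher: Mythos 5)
Your proposal is correct and follows essentially the same route as the paper's proof: blue members are pinned to degree $k$ via \Cref{lem:no-blue-edges}, green degrees are trapped in the window $g_0\mu_G e^{-\mu_G}/\gamma\,(1\pm o(1))\pm 1$ by combining the exact (small-$\varepsilon$) form of \Cref{lem:reciprocity-w-recs} with the no-sever and no-create conditions (the latter applied to the equal-degree non-neighbour from part~(b)), and the utility bounds then follow by substituting $(1-\mu_G/d)^{d}\to e^{-\mu_G}$ into the utility formula. Your reformulation of the degree window as the interval $\bigl[\gamma/(g_0 e^{-\mu_G}+\gamma\mu_G^{-1}),\,\gamma/(g_0 e^{-\mu_G}-\gamma\mu_G^{-1})\bigr]$ for $\mu_G/\dgupper$ is algebraically identical to the paper's bounds on $\dglower,\dgupper$.
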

    \remove{
    \begin{lemma}\label{lem:utilbounds}
        For any privileged-group exogenous opportunity distribution $g$, there exists a constant $\bar \gamma$ such that for any edge cost $\gamma \in (0, \bar \gamma)$, fraction of cross-group link recommendations for the unprivileged group $\rho$ and constant $\varepsilon > 0$, there exists $\munderbar{b}_0$ such that for all $n > k$, $b$ such that $\mu_B < 1$ and $b_0 > \munderbar{b}_0$ and for any equilibrium edge set $E \in \cE$, there exists a set $S \subseteq [n]$ such that $\abs{S} \geq n - O(1)$ and
        \begin{align}
            {u_i}(E) &\geq 1 - b_0 \paren{1 - \frac{\gamma }{g_0 e^{-\mu_G} + \gamma \mu_G^{-1}}}^{\rho k} - \varepsilon && \forall i \in S \cap B 
            \label{eq:uBlower}\\
            {u}_i(E) &\leq 
               1 - b_0 \paren{1 - \frac{\gamma}{ g_0 e^{-\mu_G}  - \gamma \mu_G^{-1}}}^{\rho k} + \varepsilon && \forall i \in S \cap B
            \label{eq:uBupper}\\
            {u}_i(E) &\geq 1 - g_0 (1 + \mu_G) e^{-\mu_G} - \varepsilon && \forall i \in S \cap G\\ 
            {u_i}(E) &\leq 1 -  g_0(1 + \mu_G) e^{-\mu_G} + \gamma (k + 1) + \varepsilon && \forall i \in S \cap G  
        \end{align}
    \end{lemma}
    }

\begin{proof}[Proof of \Cref{lem:utilbounds}]
        Our proof consists of the following steps:
        \begin{enumerate}
            \item Derive upper and lower bounds on the degrees of each demographic group using the equilibrium conditions and the conditions in \cref{eq:genreciprocity}.
            \item Use these degree bounds to derive upper and lower bounds on the utility of individuals in each group.
        \end{enumerate}
        Throughout the proof, we will make use of the constant $\varepsilon > 0$ to prove that, if $b_0$ is sufficiently large and $\gamma$ is sufficiently small, then the bounds on degrees and utilities are considerably simpler. 
        We will use $\varepsilon$ throughout, and assume that $\bar \gamma$ and $\munderbar{b}_0$ are set so that the inequalities are satisfied.
        The $\varepsilon$ on sequential lines will not necessarily be the same; we may use the $\varepsilon$ on one line to bound those those on subsequent lines.
        We will liberally make use of the fact that we can set $\varepsilon$ in each line of the proof so that the result holds for the $\varepsilon$ in the statement of the result.
        
    Step 1. 
    Degree bounds for members of $B$ are established as follows. Set $\munderbar{\gamma} = b_0(1-b_0-b_1)$. From \Cref{lem:no-blue-edges}, we have for all $i \in B$, $N_i(E) \setminus Q = \varnothing$.
    With \Cref{lem:all-edges-form}, this implies that $d_i = k$.
    
    Next, we establish degree bounds for members of $G$.
    To do so, we will derive explicit forms for the equilibrium conditions \cref{eq:noadd} and \cref{eq:nodelete}.
    Then we will use these conditions to construct maximum and minimum degree bounds for members of $G \cap S$, since individuals with the maximum and minimum degrees must also satisfy these equilibrium conditions.
    
    We first derive a refined version of \cref{eq:nosever} for $i \in G \cap S$. 
    For a sufficiently small global constant $\zeta > 0$, let $S$ be a set of cardinality $n - C(\gamma, k)/\zeta$ satisfying the conditions in \Cref{lem:reciprocity-w-recs} so that, for $i \in S$, $j \in N_i(E)$
    \begin{align}
        \frac{\mu_{j}(d_j + 1)}{d_j + 1} \leq \frac{\mu_{i}(d_i)}{d_i}. \label{eq:genreciprocity2}
    \end{align}
    and $j \in S$.
    (That is, we set $\zeta$ small enough so that \cref{eq:genreciprocity} holds exactly.)
    %
    %
    For all $i \in G \cap S$, from \Cref{lem:no-blue-edges}, it holds $N_i(E)\subseteq G$. 
    Thus, the second statement in \cref{lem:reciprocity-w-recs} part (a) holds. I.e., for all $i \in S$ and all $\ell \in N_i(E)$, it holds
    \begin{align*}
        \frac{\mu_G(d_\ell)}{d_\ell} \in \sqparen{ \frac{\mu_G(d_i - 1)}{d_i - 1}, \frac{\mu_G(d_i + 1)}{d_i + 1} }.
    \end{align*}
    Thus, for all $i \in S$,  \cref{eq:nosever} implies, for $d_i \geq 2$,
    \begin{align}
        0 &\geq \gamma - \max_{\frac{\mu_G(d_i + 1)}{d_i + 1} \leq v \leq  \frac{\mu_G(d_i - 1)}{d_i - 1} } {g_0 v} \paren{1 - {v}}^{d_i - k - 1} \prod_{j \; : \; (i,j) \in Q} \paren{1 - \frac{\mu_{j}(d_j)}{d_j}  }.\label{eq:nosever_g}
    \end{align}
    
    %

    We next derive a refined version of \cref{eq:nocreate} for $i \in G \cap S$.
    From part (b) of \Cref{lem:reciprocity-w-recs} (and the fact that $\zeta$ is sufficiently small), there must exist some $j$ such that $d_j = d_i$ and $i,j \in G$ but $(i,j) \not\in E$. 
    %
    Thus,
    \begin{align}
        &0 \geq \frac{g_0 \mu_G(d_i+1)}{d_i + 1} \paren{1 - \frac{\mu_G(d_i+1)}{d_i+1}}^{d_i - k} \min_{a \in \{i, j\}} {\prod_{\ell \; : \; (a,\ell) \in Q} \paren{1 - \frac{\mu_{\ell}(d_\ell)}{d_{\ell}}}} - \gamma, \label{eq:noconnect_g}
    \end{align}
    since $i$ or $j$ must not be willing to connect to the other.

    Now, suppose there exist constants $\munderbar d_G, \bar d_G \in \N$ (depending on $b, g, \gamma, \rho$ but not $n$) such that $d_i \in [\munderbar d_G, \bar d_G]$ for all $i \in G \cap S$.
    %
    %
    We next prove that such $\dgupper$ and $\dglower$ exist and establish bounds on them. 
    For ease of notation, let $\sigma = \rho {\abs{B}}/{\abs{G}}$.
    Recall that a $\rho$ fraction of recommendations are across groups. For $i \in G \cap S$, \cref{eq:nosever_g} and \cref{eq:noconnect_g} above can be relaxed as
    \begin{align}
        0 &\geq \gamma - \max_{\frac{\mu_G(d_i + 1)}{d_i + 1} \leq v \leq  \frac{\mu_G(d_i - 1)}{d_i - 1} } {g_0 v} \paren{1 - v}^{d_i - k - 1}\paren{1 - \frac{\mu_G(\bar d_g)}{\bar d_G}}^{(1-\sigma) k} \paren{1 - \frac{\mu_B(k)}{k}}^{\sigma k}, \;\; \text{and} \label{eq:nosever_g2}\\
        0 &\geq \frac{g_0 \mu_G(d_i+1)}{d_i + 1} \paren{1 - \frac{\mu_G(d_i+1)}{d_i+1}}^{d_i - k} \paren{1 - \frac{\mu_G(\munderbar d_G)}{\munderbar d_G}}^{(1-\sigma ) k} \paren{1 - \frac{\mu_B(k)}{k}}^{\sigma k} - \gamma.\label{eq:noconnect_g2} 
    \end{align}
    In particular, for $i \in G \cap S$, we can rearrange \cref{eq:noconnect_g2} as
    \begin{align*}
        d_i &\geq \frac{g_0 \mu_G(d_i + 1)}{\gamma} \paren{1 - \frac{\mu_G(d_i+1)}{d_i + 1}}^{d_i - k} \paren{1 - \frac{\mu_G(\munderbar d_G)}{\munderbar d_G}}^{(1-\sigma ) k} \paren{1 - \frac{\mu_B(k)}{k}}^{\sigma k} - 1, 
    \end{align*}
    and since any individual $i \in G \cap S$ with degree $\dglower$ must also satisfy this constraint, we have
    \begin{align}
        \dglower &\geq \frac{g_0 \mu_G(\dglower + 1)}{\gamma} \paren{1 - \frac{\mu_G(\dglower + 1)}{\dglower + 1}}^{\dglower - k} \paren{1 - \frac{\mu_B(k)}{k}}^{\sigma k} \paren{1 - \frac{\mu_G(\munderbar d_G)}{\munderbar d_G}}^{(1-\sigma) k} - 1 \nonumber \\
        &\geq \frac{g_0 \mu_G(\dglower)}{\gamma} \paren{1 - \frac{\mu_G(\dglower)}{\dglower}}^{\dglower - \sigma k} \paren{1 - \frac{\mu_B(k)}{k}}^{\sigma k} - 1\nonumber \\
        &\geq \frac{g_0 \mu_G(\dglower) e^{-\mu_G}}{\gamma} \paren{1 - \frac{\mu_B(k)}{k}}^{\sigma k} -1 \nonumber \\
        &\geq \frac{g_0 \mu_G(\dglower) e^{-\mu_G}}{\gamma} (1 - \varepsilon) -1 \label{eq:dblower} 
    \end{align}
    The second inequality comes from combining the second and fourth terms in the RHS product; the third inequality comes from the fact that $(1-\mu_G(\dglower)/\dglower)^{d_i-\sigma k} \geq e^{-\mu_G}$; the fourth inequality comes from the fact that we can set $b_0$ large enough that $\mu_B(k) \leq \mu_B \leq \varepsilon k$.
    Notice that there exists some $\bar \gamma$ so that for all $\gamma \leq \bar \gamma$, it holds $\dglower \geq 2$, so \cref{eq:nosever_g} holds for all $i \in G \cap S$.
    Moreover, notice that there exists some $\bar \gamma$ so that for all $\gamma < \bar \gamma$ it holds $\mu_G(\dglower) = \mu_G$ so that the expression simplifies to
    \begin{align*}
        \dglower \geq \frac{g_0 \mu_G e^{-\mu_G}}{\gamma} (1 - \varepsilon) -1.
    \end{align*}
    Throughout, we just need to ensure that $\munderbar{\gamma} < \bar \gamma$, which can be achieved by ensuring $b_0$ is sufficiently close to 1 (so that $\munderbar{\gamma} = b_0(1- b_0 - b_1)$ is sufficiently small).
    
    To derive an upper bound for $\dgupper$, we can use the fact that $(1-\mu_B(k)/k) \leq 1$ and $(1-\mu_B(\dgupper)/\dgupper) \leq 1$ along with \cref{eq:nosever_g2} to imply
    \begin{align*}
        \dgupper &\leq \frac{g_0 \mu_G}{\gamma} \paren{1 - \frac{\mu_G}{\dgupper + 1}}^{\dgupper - \sigma k - 1 } + 1, 
    \end{align*}
    where the inequality comes from the fact that the RHS expression is monotonic decreasing in $d_i$.
    Now, for all $\varepsilon > 0$, notice that \cref{eq:dblower} implies that there exists some $\bar \gamma$ such that for all $\gamma < \bar \gamma$, 
    \begin{align*}
        \paren{1 - \frac{\mu_G}{\dgupper + 1}}^{\dgupper - \sigma k - 1} &\leq 
        \paren{1 - \frac{\mu_G}{\dglower + 1}}^{\dglower - \sigma k - 1} \\
        &\leq e^{-\mu_G} (1 + \varepsilon),
    \end{align*}
    since the LHS expression above is monotonic decreasing in $\dgupper$, $\dglower \to \infty$ as $\gamma \to 0$ and, plugging in \cref{eq:dblower},
    \begin{align*}
        \lim_{\dglower \to \infty} \paren{1 - \frac{\mu_G}{\dglower - 1}}^{\dglower - \sigma k - 1} = e^{-\mu_G}.
    \end{align*}
    Plugging this into the previous expression for $\gamma$ in the same range, we have
    \begin{align*}
        \dgupper &\leq \frac{g_0 \mu_G e^{-\mu_G}}{\gamma}(1+\varepsilon) + 1.
    \end{align*}

    Step 2. Define $\munderbar{u}_B, \bar{u}_B$ be lower and upper bounds on the utilities of individuals in $B$ for all $E \in \cE$.
    Define $\munderbar{u}_G, \bar{u}_G$ analogously for $G \cap S$.
    (Note that the definitions are slightly different: the bounds for $B$ holds for \textit{all} members of $B$ and the bounds for $G$ only hold for the intersection with $S$.)
    Using $\munderbar d_G, \bar d_G$, we can apply the utility expressions to show
    \begin{align*}
        \munderbar{u}_G 
        &\geq 1 - g_0 \max_{v \in \{ \dglower, \dots, \dgupper \}} \paren{1 - \frac{\mu_G(v+1)}{v + 1}}^{v - k} \paren{1 - \frac{\mu_B(k)}{k}}^{\sigma k} \paren{1 - \frac{\mu_G(\bar d_G)}{\bar d_G}}^{(1-\sigma ) k}- \gamma (\dgupper - k) \\
        &\geq 1 - g_0 \max_{v \in \{ \dglower, \dots, \dgupper \}} \paren{1 - \frac{\mu_G(v+1)}{v + 1}}^{v - k}  - \gamma (\dgupper - k).
    \end{align*}
    Now, if $k \geq 2$, since $(1-\mu_G(v+1)/(v+1))^{v-k}$ is monotonic decreasing in $v$ and approaching $e^{-\mu_G}$ as $\gamma \to 0$, there exists $\bar \gamma$ such that 
    \begin{align*}
        \max_{v \in \{ \dglower, \dots, \dgupper \}} \paren{1 - \frac{\mu_G(v+1)}{v + 1}}^{v - k} \paren{1 - \frac{\mu_B(k)}{k}}^{\sigma k} \leq e^{-\mu_G} (1 + \varepsilon).
    \end{align*}
    for all $\gamma < \bar \gamma$.
    If $k = 1$, then the expression is monotonic increasing and approaching $e^{-\mu_G} \leq e^{-\mu_G}(1 + \varepsilon)$. Thus, 
    \begin{align*}
        \munderbar{u}_G & \geq 1 - g_0 e^{-\mu_G} (1 + \varepsilon) - \gamma(\dgupper - k) \\
        &\geq 1 - g_0 e^{-\mu_G} (1 + \varepsilon) - g_0 \mu_G e^{-\mu_G}(1 + \varepsilon) \\
        &\geq 1 - g_0 (1 + \mu_G) e^{-\mu_G} - \varepsilon
    \end{align*}
    where the $\varepsilon$ on each line may be different.
    For $\bar{u}_G$
    \begin{align*}
        \bar{u}_G 
        &\leq 1 - g_0 \min_{v \in \{ \dglower, \dots, \dgupper \}} \paren{1 - \frac{\mu_G(v-1)}{ v - 1 }}^{v - k} \paren{1 - \frac{\mu_B(k)}{k}}^{\sigma k} \paren{1 - \frac{\mu_G(\dglower) }{\munderbar d_G}}^{(1-\sigma ) k} - \gamma (\dglower - k) \\
        &\leq 1 - g_0 \min_{v \in \{ \dglower, \dots, \dgupper \}} \paren{1 - \frac{\mu_G(v-1)}{ v - 1 }}^{v - k} \paren{1 - \frac{\mu_B(k)}{k}}^{\sigma k} \paren{1 - \frac{\mu_G(v-1)}{v- 1}}^{(1-\sigma) k} - \gamma (\dglower - k) \\
        &\leq 1 - g_0 \min_{v \in \{ \dglower, \dots, \dgupper \}} \paren{1 - \frac{\mu_G(v-1)}{ v - 1 }}^{v - \sigma k} \paren{1 - \frac{\mu_B(k)}{k}}^{\sigma k}  - \gamma (\dglower - k) \\
        &= 1 - g_0 e^{-\mu_G}(1 - \varepsilon) \paren{1 - \frac{\mu_B(k)}{k}}^{\sigma k}  - \gamma (\dglower - k) \\
        &\leq 1 - g_0 e^{-\mu_G}(1 - \varepsilon)  - \gamma (\dglower - k) \\
        &\leq 1 -  g_0 e^{-\mu_G} (1 - \varepsilon)  - \gamma (k + 1) - g_0 \mu_G e^{-\mu_G}(1 - \varepsilon) \\
        &\leq 1 -  g_0(1 + \mu_G) e^{-\mu_G} - \gamma (k + 1) + \varepsilon  
    \end{align*}
    where again the $\varepsilon$ on each line may be different.
    %
    For the blue group,
    \begin{align*}
        \munderbar{u}_B &\geq  1 - b_0  \paren{1 - \frac{\mu_G(\dgupper)}{\bar d_G}}^{\rho k} \paren{1 - \frac{\mu_B(k)}{k}}^{(1-\rho) k}, \text{ and} \\
        \bar{u}_B &\leq 1 - b_0 \paren{1 - \frac{\mu_G(\dglower)}{\munderbar d_G}}^{\rho k} \paren{1 - \frac{\mu_B(k)}{k}}^{(1-\rho) k}
    \end{align*}
    which implies
    \begin{align*}
        \munderbar{u}_B 
        &\geq 1 - b_0 \paren{1 - \frac{\gamma }{g_0 e^{-\mu_G} (1 + \varepsilon) + \gamma \mu_G^{-1}}}^{\rho k} \\
        &\geq 1 - b_0 \paren{1 - \frac{\gamma }{g_0 e^{-\mu_G} + \gamma \mu_G^{-1}}}^{\rho k} - \varepsilon
    \end{align*}
    and
    \begin{align*}
        \bar{u}_B 
        &\leq 1 - b_0 \paren{1 - \frac{\gamma}{ g_0 e^{-\mu_G}  (1 - \varepsilon) - \gamma \mu_G^{-1}}}^{\rho k}(1 - \varepsilon) \\
        &\leq 1 - b_0 \paren{1 - \frac{\gamma}{ g_0 e^{-\mu_G}  - \gamma \mu_G^{-1}}}^{\rho k} + \varepsilon.
    \end{align*}
    Finally, we set $n$ large enough that the effect of $[n] \setminus S$ on average utility is sufficiently small that the utility bounds hold with $\varepsilon$ set appropriately above.
    \end{proof}
    \begin{proof}[Proof of \Cref{prop:cg-vs-ni}]
    First, we prove \cref{eq:sw_k_utilitarian}. We apply \Cref{lem:utilbounds} and compare a lower bound on social welfare when $k > 0$ to the upper bound for no recommendations and show that the former is greater. 
    Holding $n, b,g,\gamma$ fixed, we will denote the associated utility bounds as $\ubupper^{(k)}, \ubupper^{(0)}$ respectively. 
    To prove the result, it is sufficient to show
    \begin{align}
        \abs{B}({\ublower^{(k)} - \ubupper^{(0)} }) + \abs{G}({\uglower^{(k)} - \ugupper^{(0)}}) > 0, \label{eq:positivesw}
    \end{align}
    which states that a lower bound on social welfare under $k$ recommendations is greater than an upper bound on social welfare without recommendations.
    By plugging in the bounds from \Cref{lem:utilbounds}, for all $\varepsilon > 0$, sufficiently large $\munderbar{b}_0, \munderbar{n}$,
    \begin{align*}
        &\ublower^{(k)} - \ubupper^{(0)} \geq  b_0 \paren{1 - \paren{1 - \frac{\gamma }{g_0 e^{-\mu_G} + \gamma \mu_G^{-1}}}^{k} } - \varepsilon, \text{ and} \\
        &\uglower^{(k)} - \ugupper^{(0)} \geq  -\gamma(k+1)  - \varepsilon.
    \end{align*}
    Putting these together, \cref{eq:positivesw} holds if 
    \begin{equation}
    \label{eq:goal-5-1}
        b_0 > \frac{\abs{G}}{\abs{B}} \frac{\gamma (k+1)}{1 - \paren{1 - \frac{\gamma}{g_0 e^{-\mu_G} + \gamma \mu_G^{-1}}}^{k}},
    \end{equation}
    where we plug in $\ublower^{(k)} - \ubupper^{(0)}$ and $\uglower^{(k)} - \ugupper^{(0)}$ into \cref{eq:positivesw} and solve for $b_0$.
    %
    Note that
    \[
    \paren{1 - \frac{\gamma}{g_0 e^{-\mu_G} + \gamma \mu_G^{-1}}}^{k} \le 1 - \frac{\gamma}{g_0 e^{-\mu_G} + \gamma \mu_G^{-1}}.
    \]
    Therefore,
   \begin{align}
    \frac{\abs{G}}{\abs{B}} \frac{\gamma (k+1)}{1 - \paren{1 - \frac{\gamma}{g_0 e^{-\mu_G} + \gamma \mu_G^{-1}}}^{k}} \le \frac{\abs{G}}{\abs{B}} (k+1)(g_0 e^{-\mu_G} + \gamma \mu_G^{-1}). \label{eq:b0_lower}
   \end{align} 
    Thus, a sufficient condition for \eqref{eq:goal-5-1} to hold is $$b_0 > \frac{\abs{G}}{\abs{B}} (k+1)(g_0 e^{-\mu_G} + \gamma \mu_G^{-1}).$$
    Finally, setting $\bar g_0$ so that $g_0e^{-\mu_G} \leq g_0 e^{-(1-g_0)C} < \abs{B}/(\abs{G}4(k+1))$ (for $C$ the maximum number of possible exogenous opportunities) and $\gamma< \abs{B} \mu_G /\abs{G}(4(k+1)) \leq \bar \gamma$ proves that the RHS of \cref{eq:b0_lower} is strictly less than 1, so setting $\munderbar{b}_0$ equal to the maximum of the RHS and the $\munderbar{b}_0$ necessary so that $\munderbar{\gamma} < \bar \gamma$ in \Cref{lem:utilbounds} is sufficient to prove  \cref{eq:sw_k_utilitarian}.

    Next, we prove \cref{eq:sw_k_rawls}. First note that by \Cref{lem:no-blue-edges}, under $E_0$, members of $B$ form no connections. Thus, since $b_0 > g_0$, 
    \begin{align*}
        \argmin_{i \in [n]} u_i(E_0) \subseteq B,
    \end{align*}
    and 
    \begin{align*}
        \min_{i \in [n]} u_i(E_0) = 1 - b_0.
    \end{align*}
    On the other hand, under recommendations (and the parameters in the statement of the result), every individual $i \in [n]$ forms at least $k$ connections, and therefore must have utility no less than 
    \begin{align*}
        1 - p_{i0} \prod_{j\; : \; (i,j) \in Q} \paren{1 - \frac{\mu_j(d_j)}{d_j}} > 1 - p_{i0} \geq 1 - b_0.
    \end{align*}
    \end{proof}

    \begin{proof}[Proof of \Cref{prop:crossgroupgood}]
    We apply \Cref{lem:utilbounds} and compare the lower bound on social welfare with equilibria induced under $\rho$ versus $\rho'$ and show the former is greater.
    Holding $n, b,g,\gamma$ and a sufficiently small $\varepsilon > 0$ fixed, we will denote the associated utility bounds for sets $S^{(\rho)}, S^{(\rho')}$ satisfying \Cref{lem:reciprocity-w-recs} for $\varepsilon$ as $\ubupper^{(\rho)}, \ubupper^{(\rho')}$ respectively. 
    %
    Notice that, for large enough $n$, this is implied by
    \begin{equation}
    \label{eq:goal-5-2}
        \abs{B}({\ublower^{(\rho)} - \ubupper^{(\rho')} }) + \abs{G}({\uglower^{(\rho)} - \ugupper^{(\rho')}}) > 0.
    \end{equation}
    Applying the bounds from \Cref{lem:utilbounds}, for all $\varepsilon > 0$
    \begin{align*}
        &\ublower^{(\rho)} - \ubupper^{(\rho')} 
        \geq b_0 \paren{\paren{1 - \frac{\gamma}{ g_0 e^{-\mu_G}  - \gamma \mu_G^{-1}}}^{\rho' k} - \paren{1 - \frac{\gamma }{g_0 e^{-\mu_G} + \gamma \mu_G^{-1}}}^{\rho k}} - \varepsilon, \text{ and} \\
        &\uglower^{(\rho)} - \ugupper^{(\rho')} \geq -\gamma(k+1) - \varepsilon.
    \end{align*}
    Putting these together, we have for all $\varepsilon > 0$,
    \begin{align*}
        &\abs{B} (\ublower^{(\rho)} - \ubupper^{(\rho')}) + \abs{G} (\uglower^{(\rho)} - \ugupper^{(\rho')}) \\
        &\geq \abs{B}b_0 \paren{\paren{1 - \frac{\gamma}{ g_0 e^{-\mu_G}  - \gamma \mu_G^{-1}}}^{\rho' k} - \paren{1 - \frac{\gamma }{g_0 e^{-\mu_G} + \gamma \mu_G^{-1}}}^{\rho k}} - \abs{G}\gamma (k+1) - \varepsilon. 
    \end{align*}
    Thus, to guarantee \eqref{eq:goal-5-2}, it suffices to make sure
    \[
    \abs{B}b_0 \paren{\paren{1 - \frac{\gamma}{ g_0 e^{-\mu_G}  - \gamma \mu_G^{-1}}}^{\rho' k} - \paren{1 - \frac{\gamma }{g_0 e^{-\mu_G} + \gamma \mu_G^{-1}}}^{\rho k}} - \abs{G}\gamma (k+1) > 0,
    \]
    which is equivalent to
    \begin{align}
        \rho'  < \frac{\log \paren{ \paren{1 - \frac{\gamma }{g_0 e^{-\mu_G} + \gamma \mu_G^{-1}}}^{\rho k}  + \abs{G}\abs{B}^{-1}b_0^{-1}\gamma(k+1)}}{k \log \paren{1 - \frac{\gamma}{ g_0 e^{-\mu_G}  - \gamma \mu_G^{-1}}}}. \label{eq:rho}
    \end{align}
    We need the RHS to be positive for there to exist some $\rho'$ satisfying the statement of the result.
    Notice that 
    \begin{align*}
        {1 - \frac{1}{ g_0 e^{-\mu_G}\gamma^{-1}  - \mu_G^{-1}}} < 1 \iff \gamma < g_0 \mu_G e^{-\mu_G}
    \end{align*}
    which implies the logarithm is negative.
    Thus, the RHS expression in \cref{eq:rho} is greater than zero if     
    \begin{align*}
         \paren{1 - \frac{\gamma }{g_0 e^{-\mu_G} + \gamma \mu_G^{-1}}}^{\rho k}  + \frac{\abs{G}}{\abs{B}}\frac{\gamma(k+1)}{b_0} < 1
    \end{align*}
    which holds for all $\rho k \geq 1$ if
    \begin{align*}
        b_0 > \frac{\abs{G}}{\abs{B}}\paren{\frac{g_0}{ e^{\mu_G}} + \frac{\gamma}{\mu_G}}(k+1).
    \end{align*}
    This RHS expression is less than 1 for all $g_0$ such that 
    \begin{align*}
        g_0 < \frac{\abs{G}(k+1) - \abs{B}}{\abs{G}(k+1) + \abs{B}}.
    \end{align*}

    Our proof of the second statement (about the minimum utility) relies on arguing that, first, the minimum is achieved by blue group members and, second, that the blue group member utilities are greater, for all equilibria under $\rho k$ cross-group connections than under $\rho' k$ cross-group connections.
    Notice, for all $i \in B$,
    \begin{align*}
        u_i(E_\rho) \leq 1 - b_0 \paren{1 - \frac{\mu_G}{k}}^{\rho k} + \varepsilon.
    \end{align*}
    This is a larger bound than in \Cref{lem:utilbounds} because we are reasoning about all $i \in B$, not just $i \in B \cap S$.
    On the other hand, for all $j \in G$,
    \begin{align*}
        u_j(E_\rho) \geq 1 - g_0. 
    \end{align*}
    Thus, for $u_i(E_\rho) < u_j(E_\rho)$ if
    \begin{align*}
        b_0 > \frac{g_0}{\paren{1 - \frac{\mu_G}{k}}^{\rho k}}.
    \end{align*}
    Since the denominator in the RHS expression is monotone decreasing in $k$, it is lower-bounded by its limit as $k \to \infty$, which is $e^{-\rho \mu_G}$. Thus, a sufficient condition for the RHS being less than 1 is $g_0 e^{\mu_G} < 1$.

    Next, we will argue that $\munderbar{u}_B^{(\rho)} - \bar{u}_B^{(\rho')}\geq 0$. By the same reasoning as above, this holds if 
    \begin{align*}
        \rho'  < \frac{\log \paren{ \paren{1 - \frac{\gamma }{g_0 e^{-\mu_G} + \gamma \mu_G^{-1}}}^{\rho k} }}{k \log \paren{1 - \frac{\gamma}{ g_0 e^{-\mu_G}  - \gamma \mu_G^{-1}}}}.
    \end{align*}
    Finally, the RHS is positive if 
    \begin{align*}
        {\frac{\gamma }{g_0 e^{-\mu_G} + \gamma \mu_G^{-1}}} > 0
    \end{align*}
    which is true for all positive $\gamma, g_0, \mu_G$.
    
    \end{proof}





\end{document}